\documentclass[a4paper,twocolumn,10pt,superscriptaddress,accepted=2019-05-09]{quantumarticle} 
\usepackage[colorlinks=true,linkcolor=blue,citecolor=blue,urlcolor=blue]{hyperref}
\pdfoutput=1

\usepackage[utf8]{inputenc}
\usepackage[backend=bibtex,sorting=none,firstinits=true,style=phys,maxbibnames=5,biblabel=brackets,chaptertitle=false,pageranges=false,doi=false]{biblatex}
\addbibresource{qcabib.bib}

\usepackage{dcolumn}   
\usepackage{bm}        
\usepackage{subfigure}
\usepackage{amsthm}
\usepackage[german, english]{babel}
\usepackage[export]{adjustbox}
\usepackage[]{units}
\usepackage{titlesec}
\usepackage{amsmath}
\usepackage{amsfonts}
\usepackage{amssymb}
\usepackage{braket}
\usepackage{lipsum}

\newtheorem{Theorem}{Theorem}

\newtheorem{Lemma}{Lemma}

\newtheorem{Prop}{Proposition} 

\begin{document}

\title{Subsystem symmetries, quantum cellular automata,  and computational phases of quantum matter}
\date{\today}
\author{David T. Stephen}
\affiliation{Max-Planck-Institut f{\"u}r Quantenoptik, Hans-Kopfermann-Stra{\ss}e 1, 85748 Garching, Germany}
\author{Hendrik Poulsen Nautrup}
\affiliation{ Institut f{\"u}r Theoretische Physik, Universit{\"a}t Innsbruck, Technikerstr. 21a, A-6020 Innsbruck, Austria}
\author{Juani Bermejo-Vega}
\affiliation{\mbox{Dahlem Center for Complex Quantum Systems, Freie Universit{\"a}t Berlin, 14195 Berlin, Germany}}
\author{Jens Eisert}
\affiliation{\mbox{Dahlem Center for Complex Quantum Systems, Freie Universit{\"a}t Berlin, 14195 Berlin, Germany}}
\author{Robert Raussendorf}
\affiliation{Department of Physics and Astronomy, University of British Columbia, Vancouver, British Columbia V6T 1Z1, Canada}
\affiliation{Stewart Blusson Quantum Matter Institute, University of British Columbia, Vancouver, British Columbia, V6T 1Z4, Canada}

\begin{abstract}
Quantum phases of matter are resources for notions of quantum computation. In this work, we establish a new link between concepts of quantum information theory and condensed matter physics by presenting a unified understanding of symmetry-protected topological (SPT) order protected by subsystem symmetries and its relation to measurement-based quantum computation (MBQC). The key unifying ingredient is the concept of quantum cellular automata (QCA) which we use to define subsystem symmetries acting on rigid lower-dimensional lines or fractals on a 2D lattice. Notably, both types of symmetries are treated equivalently in our framework. We show that states within a non-trivial SPT phase protected by these symmetries are indicated by the presence of the same QCA in a tensor network representation of the state, thereby characterizing the structure of entanglement that is uniformly present throughout these phases. By also formulating schemes of MBQC based on these QCA, we are able to prove that most of the phases we construct are computationally universal phases of matter, in which every state is a resource for universal MBQC. Interestingly, our approach allows us to construct computational phases which have practical advantages over previous examples, including a computational speedup. The significance of the approach stems from constructing novel computationally universal phases of matter and showcasing the power of tensor networks and quantum information theory in classifying subsystem SPT order.
\end{abstract}

\maketitle

The fields of study of quantum phases of matter and of quantum computation have been evolving alongside each other for over a decade, such that they are now deeply intertwined. This is on the one hand because many instances of non-trivial quantum order are key to storing or processing quantum information, an idea that has stimulated a plethora of theoretical and experimental research in both fields.
Perhaps the most familiar example is the idea of topological quantum computation which leverages the anyonic excitations of topologically ordered systems to perform error-resilient quantum computation~\cite{Kitaev2003,Freedman2003,Nayak2008}. A complementing approach to topological quantum computation uses Majorana fermions located at the edges of one-dimensional (1D) chains with symmetry-protected topological (SPT) order \cite{Kitaev2001,Alicea2011,Lutchyn2018}. 
On the other hand, quantum states exhibiting SPT order can be used as resources for instances of measurement-based quantum computation  \cite{Doherty2009,Miyake2010,Bartlett2010,Else2012,Else2012a,Miller2015,Wang2016,Stephen2017,Raussendorf2017,Nautrup2015,Miller2016,Miller2018,Wei2017,Wei2018,Raussendorf2018,Devakul2018a} -- an insight most relevant to the present work. Various further examples can be found \cite{Vijay2015,Dua2018,Bauer2018}; indeed, every time a new type of quantum order is discovered, it is not long before its uses in notions of quantum computation are being investigated. 

\begin{figure}[t]
\centering
\includegraphics[width=0.7\linewidth]{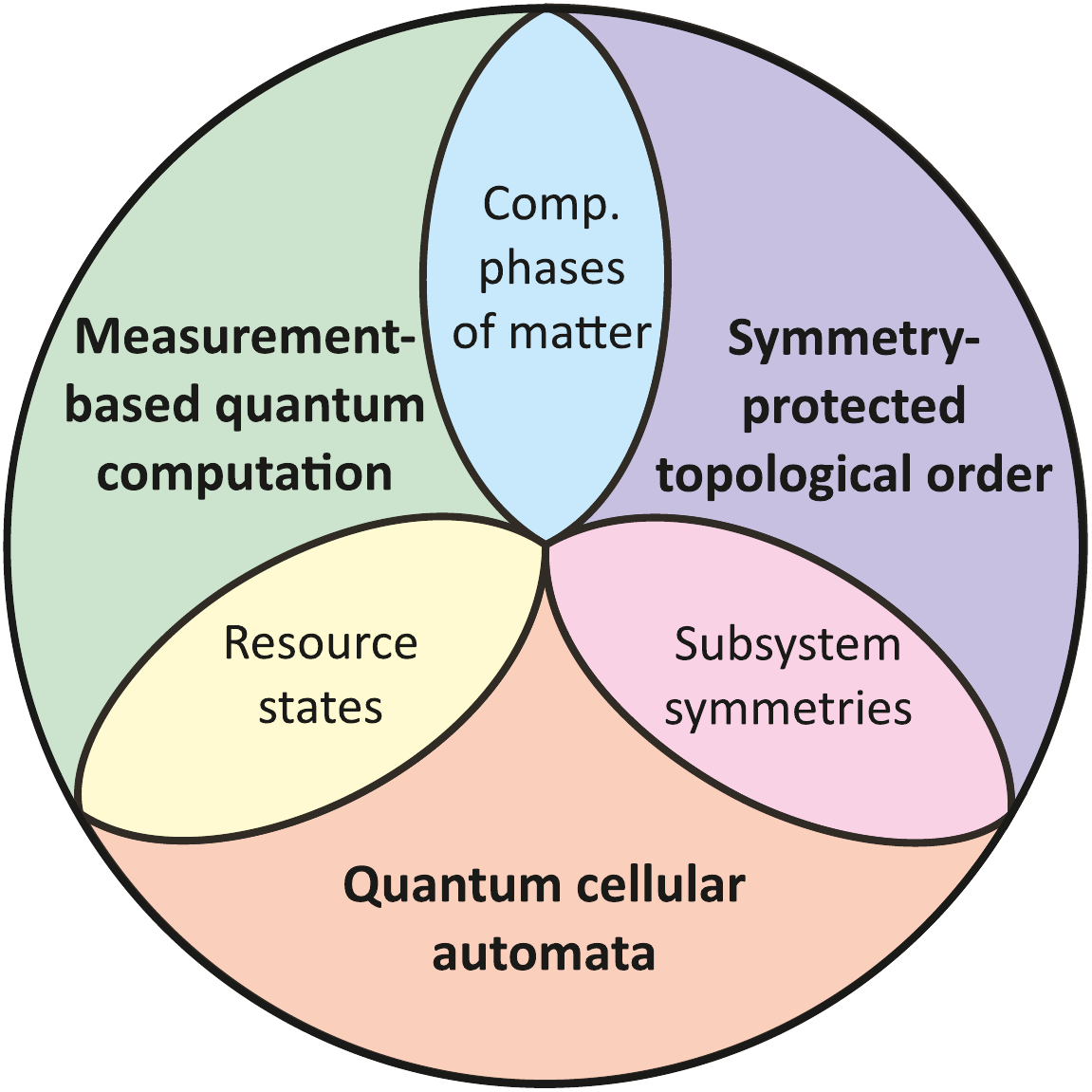}
\caption{Interrelation between symmetry-protected topological (SPT) order, measurement-based quantum computation (MBQC), and quantum cellular automata (QCA).  By framing both subsystem SPT order and MBQC in terms of QCA, we develop a new framework for characterizing subsystem SPT order and constructing computationally universal phases of matter.}
\label{fig:venn}
\end{figure}

The most intriguing aspect of this joint development, however, is that the 
intertwinement of the two fields goes both ways. Often, in-depth characterizations of quantum phases of matter are spurred by the desire to understand their computational properties \cite{Kitaev2003,0706.3401,Nayak2008,BombinDelgado2008,Yoshida2015,Roberts2017,Else2012}. 
In extreme cases of this, novel notions of quantum order have been defined in such a way that they are tailor-made for specific computational tasks \cite{Doherty2009,Else2012a,Miller2018,Raussendorf2018}.
 One such example can be found in Ref.~\cite{Raussendorf2018} in which the authors define a new two-dimensional (2D) SPT phase referred to as the ``cluster phase'', whose defining symmetries had also been suggested in earlier works~\cite{Doherty2009,Else2012a}. This phase is designed to be useful for measurement-based quantum computation (MBQC), a scheme of quantum computation that is performed by local measurements only on initially entangled states \cite{Raussendorf2001,Raussendorf2003}, avoiding any physical  implementation of coherent quantum gates. In the context of MBQC, a many-body quantum state is called a \textit{universal} resource state if the circuit model of quantum computation can be efficiently simulated by performing local measurements on the state. Hence, identifying and classifying universal resources is an important open problem in the study of MBQC. Recently, significant progress in this direction has been made thanks to a link between MBQC and SPT order: Many known MBQC resource states exhibit SPT order and, in some cases, their usefulness as resources persists throughout the entire SPT phase in which they lie. This link has nearly completely been established in 1D  \cite{Doherty2009,Miyake2010,Bartlett2010,Else2012,Miller2015,Wang2016,Stephen2017,Raussendorf2017}, but to a much lesser extent in 2D  \cite{Nautrup2015,Miller2016,Miller2018,Wei2017,Wei2018,Devakul2018a}. The aforementioned cluster phase has been the first example of a \textit{computationally universal phase of matter}, one in which every state is a universal resource for MBQC.

What distinguishes the cluster phase from other candidates for computationally universal phases? This answer lies in the structure of the symmetries that define it: While conventional SPT phases are defined by global on-site symmetries \cite{Pollman2010,Schuch2011,Chen2011,Chen2013,Senthil2015,PhysRevB.96.235138} or crystalline symmetries \cite{Song2017,Thorngren2018,Huang2017}, the symmetries in the cluster phase act rather on rigid 1D lines spanning the 2D lattice.
It is the first example of a \textit{subsystem SPT phase}, which is any SPT phase protected by symmetries that act on \textit{rigid} lower-dimensional structures, such as lines \cite{You2018,Devakul2018b}, planes \cite{You2018a}, or even fractals of non-integer dimension \cite{Williamson2016,Devakul2018,Kubica2018,Devakul2018a}, see Fig.~\ref{fig:fractal} for an example\footnote{Subsystem symmetries should be distinguished from \emph{higher-form} symmetries which act on \textit{deformable} lower-dimensional subsystems, are related to higher-form gauge theories~\cite{Gaiotto2015}, and have recently appeared in the context of quantum memories and quantum error correction~\cite{Yoshida2016,Kubica2018,Roberts2017,Roberts2018}.}. 
For such phases the total symmetry group, and therefore the dimension of the symmetry-protected edge states in which logical information is encoded and processed via MBQC, grows with system size. This is the key feature of subsystem SPT phases that enables their computational universality.  Independent from their use in MBQC, subsystem symmetries have recently garnered interest in condensed matter physics thanks to their relation to fracton topological order \cite{Haah2011,Yoshida2013,Vijay2015b,Vijay2016} (and, much earlier, topological order~\cite{Nussinov2009,Nussinov2009a}). It has been shown that, via generalized gauging procedures, certain states with subsystem symmetry can be transformed to states exhibiting fracton topological order~\cite{Vijay2016,Williamson2016,Kubica2018,You2018a,Shirley2018,Song2018}. Thus, it is high time to develop a framework for constructing and characterizing subsystem SPT phases.

\begin{figure}[t]
\centering
\subfigure[]{\label{fig:fractal}\includegraphics[width=0.8\linewidth]{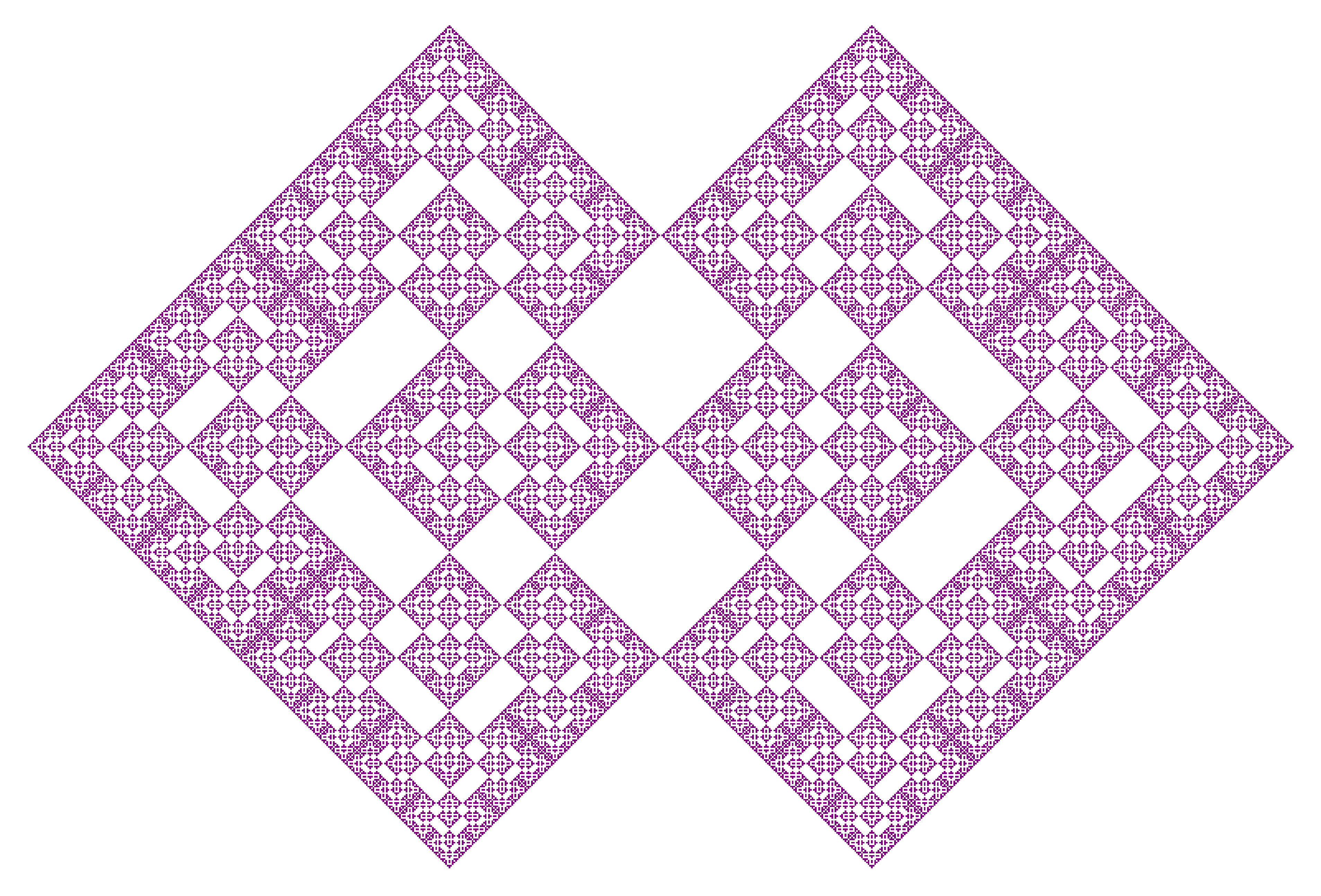}}
\subfigure[]{\label{fig:circuit}\includegraphics[width=0.8\linewidth]{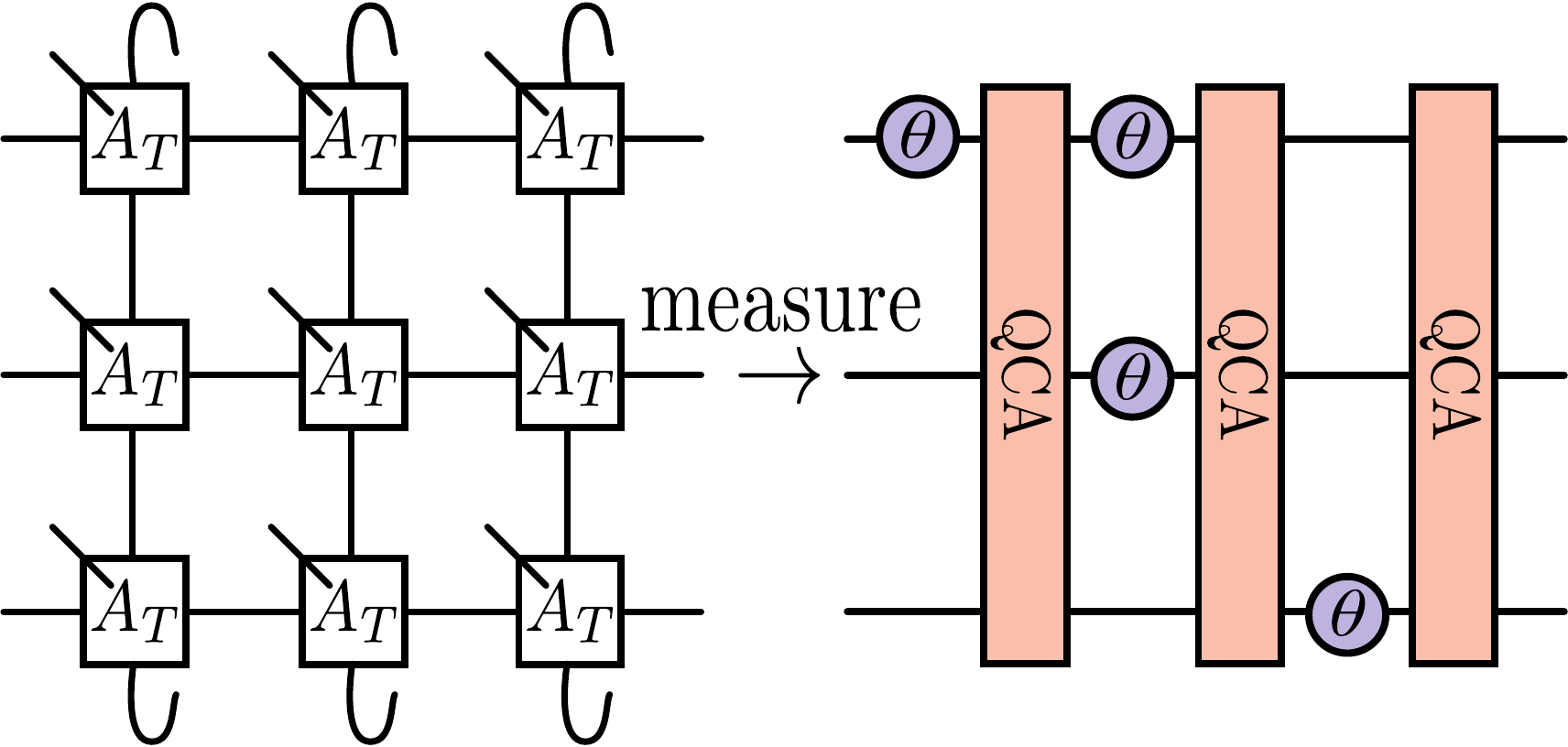}}
\caption{Two of the roles played by QCA in our results. a) Example of a fractal symmetry operator defined by QCA that emerges from our framework. The operator is a tensor product of local Pauli-$X$ operators arranged on a square lattice in the pattern shown. b) We define tensor network states from QCA, such that measuring a block of spins results in a quantum circuit living in the virtual space of the tensor network. The circuit, which forms the backbone of our MBQC scheme, consists of global applications of the QCA interspersed by single qubit rotations determined by the bases in which spins are measured. This structure appears within every state in the corresponding subsystem SPT phase, leading to uniform computational power across the phase.}
\label{fig:fractalcircuit}
\end{figure}

Historically, an important tool for developing the link between quantum phases of matter and quantum computation is the language of tensor networks \cite{Bridgeman2017}. In this language, the pure quantum state is decomposed into a network of local tensors associated to each site in the lattice. Remarkably, properties of a state that are inherently non-local, such as topological degeneracy \cite{Schuch2010} or string order \cite{Pollmann2012a}, can be understood locally in terms of symmetries of the local tensors. Studying these symmetries leads to an extremely powerful approach to detect and even classify topological order \cite{Schuch2010,Bultinck2017,Duivenvoorden2017}, SPT order \cite{Chen2011,Schuch2011,Pollman2010,Williamson2016a,Jiang2017,Molnar2018a}, and others \cite{Poilblanc2015,Williamson2017,Bultinck2018,Dreyer2018}. The same symmetries also aid in understanding a quantum state's usefulness for quantum computation. This is especially true when considering SPT order and MBQC, wherein the group of tensor symmetries that define 1D SPT order is essentially the only ingredient needed to construct a scheme of MBQC, and hence the possible MBQC schemes inherit the same algebraic structure that classifies 1D SPT order \cite{Stephen2017,Raussendorf2017}. 
In Ref.~\cite{Raussendorf2018} as well, it is shown that the cluster phase is characterized by a set of symmetries of the local tensor, and these symmetries were used to derive the computational scheme therein. Using similar methods, the same result has recently been shown for a pair of subsystem SPT phases protected by fractal symmetries \cite{Devakul2018a}. These results therefore point towards the possibility of a unified understanding of subsystem SPT order and its relation to MBQC via tensor networks.

In the present work, we provide such a unified framework by viewing both subsystem SPT order and MBQC in terms of underlying quantum cellular automata (QCA), thereby exposing a tight-knit relation between the three concepts, see Fig.~\ref{fig:venn}. The QCA are the new ingredient that becomes essential for the description 
of MBQC in SPT phases in dimensions higher than 1, see Fig.~\ref{fig:fractalcircuit} for illustrations. More precisely, we use QCA to define tensor network states that exhibit SPT order with respect to either rigid line-like or fractal subsystem symmetries, and we then characterize the corresponding subsystem SPT phases and their computational capabilities.

Our first result shows that non-trivial subsystem SPT order under the symmetries we consider is characterized by the presence of the QCA within the tensor network, which is persistent throughout the corresponding phase. This is akin to the behavior seen in the cluster phase, as the single-tensor symmetries that characterize the cluster phase can be used to derive the presence of a QCA in the tensor network \cite{Raussendorf2018}.
This means the patterns of entanglement found in these phases are characterized in part by QCA, demonstrating the possible use of tensor networks in obtaining a classification of subsystem SPT order. Interestingly, our framework treats line-like symmetries and fractal symmetries on the same footing, showing that different types of subsystem symmetries are more similar than one would think based on their structure. Indeed, we find that the 2D cluster state has subsystem SPT order under both types of symmetries.

We then turn to investigating the computational capability of the constructed phases. Using the above characterization, we show that every phase we construct is computationally universal in the same way as the cluster phase, except for those defined by non-entangling QCA. Hence, our framework gives a systematic way to identify computationally universal phases of matter which, up until now, have remained elusive outside a select few cases \cite{Raussendorf2018,Devakul2018a}. The computational schemes we develop are strictly tied to the QCA that define the phases (see Fig.~\ref{fig:circuit}), further strengthening the connection between quantum computation and SPT phases in 2D. 

Our perspective on MBQC based on QCA has the additional feature that, by choosing different QCA, the set of gates executable in a single step can be tailored to suit the problem at hand. In particular, our framework allowed us to uncover a particular class of subsystem SPT phases for which the corresponding computational schemes enjoy a quadratic reduction in the number of measurements per gate versus the number of logical qubits, as compared to previous schemes \cite{Raussendorf2018,Devakul2018a}. One such phase is built around a modified cluster state with additional qubits placed on the horizontal edges. We briefly discuss the implications that this result and our general framework may have on related tasks such as blind quantum computation~\cite{Fitzsimons2017,Mantri2017b}, quantum computation with global control~\cite{Raussendorf2005,Fitzsimons2006,Fitzsimons2007}, and in experimental demonstrations of  quantum computational advantage \cite{Bermejo-Vega2018,Hangleiter2018}.

This work is organized as follows. In Sec.\ \ref{sec:qca}, we begin with a review the basic properties of QCA. In Sec.\ \ref{sec:qcapeps}, we use QCA to define tensor network states and show that they have non-trivial SPT order under certain subsystem symmetries. Then, in Sec.\ \ref{sec:cyclesymms}, we investigate the properties of the corresponding SPT phases in a quasi-1D picture before moving to a genuine 2D picture in Sec.\ \ref{sec:sspt}. In Sec.\ \ref{sec:computation}, we classify these phases by their computational power in measurement-based quantum computing. Finally, in Sec.\ \ref{sec:discussion}, we discuss possible applications and extensions of our results.

\color{black}

\section{Quantum cellular automata} \label{sec:qca}

In this section, we present a review of \emph{quantum cellular automata (QCA)} for qubit systems, as described in Refs.~\cite{Schumacher2004,Schlingemann2008,Gutschow2010}, as they will be central to our
description of symmetry protected topological order with respect to
subsystem symmetries. A 1D QCA is a translationally-invariant locality-preserving unitary acting on a 1D chain of qubits~\cite{Schumacher2004}. That is, a QCA maps any locally supported operator to another locally supported operator, with the size of the support increased by an amount independent of the size of the original support. In Ref.~\cite{Cirac2017}, it was shown that QCA acting on 1D systems are equivalent to \emph{matrix product unitaries (MPU)}, in that every QCA can be represented as an MPU with finite bond dimension, and every MPU is a QCA. An MPU is a matrix product operator defined by a local tensor $\mathcal{T}$, which generates a unitary $T$ on a ring of arbitrary length $N$. Graphically, the MPU can be represented by the tensor network
\begin{equation}\label{eq:mpu}
T=\includegraphics[scale=1,valign=c]{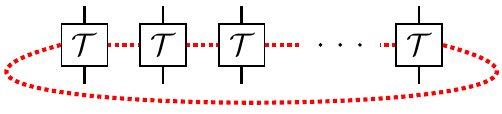}.
\end{equation}
\noindent Here, the dotted lines represent the virtual indices of the MPU. All QCA described in this paper act on systems with periodic boundary conditions. Note that we use the graphical language of tensor networks extensively throughout this work, see Ref.~\cite{Bridgeman2017} for a review of the relevant notation and concepts.

In what follows, we will focus on \emph{Clifford quantum cellular automata (CQCA)} \cite{Schlingemann2008,Gutschow2010}, which are QCA that map products of Pauli operators to products of Pauli operators. For reasons of simplicity of notation, we focus on the Pauli Clifford group
for qubits, even
though the formalism laid out here could be applied larger dimensions 
as well~\cite{Gottesman98Fault_Tolerant_QC_HigherDimensions,Bermejo-Vega2014}. 
On a finite chain of $N$ qubits, we can define the Pauli group $\mathcal{P}_N$ as the group generated by all local Pauli operators $X_i$, $Y_i$, and $Z_i$ acting on qubit $i$, where $i$ is defined modulo $N$ and can take negative values giving, for example, $X_{-1}=X_{N-1}$.  A CQCA is a QCA that is also an automorphism of $\mathcal{P}_N$, defined by a unitary transition function $T$ such that $P\mapsto T(P):= T^\dagger PT$ for any $P\in \mathcal{P}_N$.

CQCA have been studied extensively in Refs.~\cite{Schlingemann2008,Gutschow2010} which introduced a compact representation of CQCA. Firstly, a CQCA $T$ is completely specified by the images $T(X_0)$ and $T(Z_0)$. To see this, note that the QCA is translationally invariant, and that $X_i$ and $Z_i$ generate the whole Pauli group up to phases, which contains a basis of the space of all $2^N\times 2^N$ matrices.  Next, we represent elements of $\mathcal{P}_N$, up to phases, by $2N$-component binary strings $\bm{\xi}=(\bm{\xi}^X,\bm{\xi}^Z)$ such that 
\begin{equation} \label{eq:pauli}
V(\bm{\xi})=\bigotimes_{i=1}^N X_i^{\xi^X_i} Z_i^{\xi^Z_i}\in \mathcal{P}_N.
\end{equation} 
These strings form the group $\mathbb{Z}_2^N\times\mathbb{Z}_2^N$, for which $\bm{\xi}\mapsto V(\bm{\xi})$ forms a faithful irreducible projective representation. Note that we ignore all complex phases in front of Pauli operators throughout, since they do not affect the symmetries or phases of matter we define. We further condense the notation using the language of Laurent polynomials~\cite{Schlingemann2008}. We map $\bm{\xi}$ onto a vector of polynomials of a variable $u$ as 
\begin{equation}
\begin{pmatrix} \bm{\xi}^X  \\ \bm{\xi}^Z \end{pmatrix} \mapsto \begin{pmatrix} \sum_i (u^i)^{\xi^X_i}  \\ \sum_i (u^i)^{\xi^Z_i} \end{pmatrix}.
\end{equation}
The purpose of the variable $u$ is the keep track of the moving and spreading of local Pauli operators under $T$. For example, the operator $X_1Z_0X_{-1}$ is represented as 
\begin{equation}
\bm{\xi}=\begin{pmatrix} u+u^{-1}  \\ 1 \end{pmatrix}. 
\end{equation} 
We use the symbol $\bm{\xi}$ to represent both the binary and polynomial representations of an element of $\mathcal{P}_N$ interchangeably.

Finally, we can represent the CQCA $T$ as a $2\times 2$ matrix $t$ of polynomials by arranging 
\begin{equation}
T(X_0):= t\begin{pmatrix} 1 \\ 0\end{pmatrix},\,\,
T(Z_0):= t\begin{pmatrix} 0 \\ 1\end{pmatrix} 
\end{equation} 
into columns of a matrix. For a concrete example, consider the CQCA $T_g$ defined by the relations 
\begin{equation} \label{eq:tg}
T_g(X_i)=X_{i-1}Z_iX_{i+1}\,, T_g(Z_i)=X_i.
\end{equation} 
This CQCA is the QCA of the 2D cluster state~\cite{Raussendorf2003}, and it has appeared several times already in the context of quantum computation~\cite{Raussendorf2005,Bermejo-Vega2018,Mantri2017}. The subscript $g$ refers to the term \textit{glider} which we shall introduce shortly. In the polynomial representation, this CQCA becomes
\begin{equation}
t_g=\begin{pmatrix} u+u^{-1} & 1 \\ 1 & 0 \end{pmatrix}.
\end{equation}
Every CQCA $T$ can be represented as a 2 $\times$ 2 matrix $t$ whose entries are Laurent polynomials over $\mathbb{Z}_2$, up to phase factors~\cite{Schlingemann2008}. 
We further restrict to CQCA for which the images $T(X_i)$, $T(Z_i)$ are symmetric about site $i$, meaning there is no translation in the CQCA\footnote{In Refs.~\cite{Gross2012,Cirac2017}, QCA are assigned an index according to the amount of information flow to the left or right. Our restriction is to QCA with index 0}. As shown in Ref.~\cite{Schlingemann2008}, this corresponds to the matrix $t$ having unit determinant. We make this restriction only because translation will not be particularly interesting for our purposes. With this restriction, we have that all entries in $t$ are symmetric Laurent polynomials, meaning that $u^{-k}$ appears whenever $u^k$ does, for all $k$. Finally, we implement the periodic boundary conditions by taking all polynomials modulo the relation $u^N=1$.

CQCA can be split into three classes depending on their trace~\cite{Gutschow2010}.
\begin{itemize}
\item \textit{Periodic CQCA.} When $\mathrm{Tr}(t)=0$ or 1 the CQCA has \textit{periodic }behavior.
\item \textit{Glider CQCA.} When $\mathrm{Tr}(t)=u^{c}+u^{-c}$ for some positive integer $c$, the CQCA supports \textit{gliders}. These are operators on which the CQCA acts as translation by $\pm c$ sites. 
\item \textit{Fractal CQCA.} If neither of these conditions hold, the CQCA will display self-similar \textit{fractal} behavior. 
\end{itemize}
The CQCA in Eq.~(\ref{eq:tg}) is of glider type, with $\mathrm{Tr}(t_g)=u+u^{-1}$. Indeed, we can check that $T_g(X_iZ_{i-1})=X_{i+1}Z_i$, and $T_g(Z_{i+1}X_i)=Z_iX_{i-1}$.

The Laurent polynomial representation allows us to uncover an identity which will be useful at several points throughout this work. Namely, due to the Cayley-Hamilton theorem, we obtain~\cite{Gutschow2010}
\begin{equation} \label{eq:cayley}
t^2=\mathrm{Tr}(t) t + \mathbb{I},
\end{equation}
where we have used our assumption that $\det(t)=1$, and the fact that the polynomials are defined over the field $\mathbb{Z}_2$, so addition and subtraction are equivalent. This useful equation allows us to reduce any power of $t$ to a linear combination of $t$ and $\mathbb{I}$. 

\section{Defining PEPS from QCA} \label{sec:qcapeps}

Now we use the correspondence between QCA and MPU to define \emph{projected entangled pair states (PEPS)}~\cite{Verstraete2008}. Given a CQCA $T$, we first represent it as an MPU with local tensor $\mathcal{T}$, as described in Ref.~\cite{Cirac2017}. We then define the PEPS in terms of $\mathcal{T}$ by a local tensor $A_T$ whose components are given by,
\begin{equation}\label{eq:qcapeps}
\includegraphics[scale=1,valign=c,raise=0.13cm]{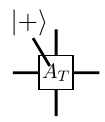}:=
\includegraphics[scale=1,valign=c]{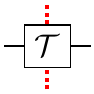}\quad,\quad
\includegraphics[scale=1,valign=c,raise=0.13cm]{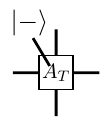}:=
\includegraphics[scale=1,valign=c]{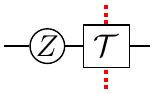}.
\end{equation}
Given ${\cal T}$, the PEPS tensors $A_T$ for local qubit dimension can be
uniquely defined in this way, the vectors $\{\ket{\pm}\}$ constituting a basis.
The resulting PEPS may not be rotationally invariant; the dimension of the virtual indices along the horizontal and vertical directions may not even match. Since we will be mainly treating these PEPS as quasi-1D systems, in a way that will be explained subsequently, we will not worry about this property of our construction here. We call the PEPS constructed in Eq.~(\ref{eq:qcapeps}) ``fixed-point'' PEPS because they will appear as special points within SPT phases, such as the AKLT state~\cite{Affleck1987} within the 1D Haldane phase. However, they are not  fixed-points of any renormalization transformation defined here.

It is important to stress how the term \textit{quasi-1D system} will be used here:
The mapping to a quasi-1D system occurs by putting our PEPS on a long, skinny torus of dimensions $N\times M$ ($M\gg N$), and blocking tensors $A_T$ into rings along the skinny direction of the torus. This gives rise to an effectively one-dimensional system,
\begin{equation} \label{eq:ringblock}
\includegraphics[scale=1,valign=c]{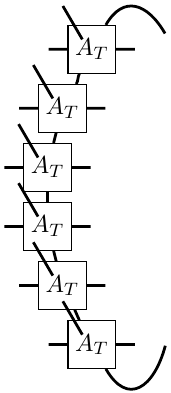}:=\includegraphics[scale=1,valign=c,raise=0.12cm]{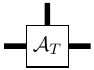} \quad .
\end{equation}
Therein, the contracted virtual legs correspond to the dotted red legs in Eq.~(\ref{eq:qcapeps}). The resulting ring tensor, denoted by $\mathcal{A}_T$, becomes the tensor of a 
\emph{matrix product state (MPS)} \cite{Perez-Garcia2006} representation of a quasi-1D system. For clarity, we use thicker lines when graphically representing MPS tensors like $\mathcal{A}_T$ which represent blocked tensors. We stress that, although we use a quasi-1D approach, the subsystem SPT phases that we construct can be distinct from stacks of 1D SPT chains, as will be discussed in Sec.~\ref{sec:sspt}.

If we contract the physical (upwards-directed) leg of the tensor $\mathcal{A}_T$ with the $N$-qubit state vector  ${|\mathbf{j}\rangle}={|j_1,j_2,\dots,j_N\rangle}$, the result is a $2^N\times 2^N$ matrix acting in the virtual space of the tensor network, denoted $\mathcal{A}_T^\mathbf{j}$. The state vector of the fixed-point PEPS is then written as
\begin{equation} \label{eq:mps}
{|\psi_T\rangle}=\sum_{\mathbf{j}_1,\dots\mathbf{j}_M} \mathrm{Tr}\left( \mathcal{A}_T^{\mathbf{j}_1}\mathcal{A}_T^{\mathbf{j}_2}\dots\mathcal{A}_T^{\mathbf{j}_M}\right) {|\mathbf{j}_1,\mathbf{j}_2,\dots,\mathbf{j}_M\rangle}.
\end{equation}
As is common in the description of quantum many-body systems with tensor networks~\cite{Schuch2011,Chen2011,Else2012,Else2012a,Miller2015,Raussendorf2017,Stephen2017}, throughout this work
we consider quantum states that can be captured as exact tensor network states, 
which is meaningful as they approximate general states arbitrarily well \cite{Schuch2008,Verstraete2006}. We also assume that our MPS can be constructed with \textit{injective} MPS tensors $\mathcal{A}_T$. This condition means that, for sufficiently large $l$, the set of products $\{\mathcal{A}_T^{\mathbf{j}_1}\dots\mathcal A_T^{\mathbf{j}_l}\}_{\mathbf{j}_1,\dots,\mathbf{j}_l}$ spans the space of all $2^N\times 2^N$ matrices~\cite{Perez-Garcia2006}. Physically, injectivity is equivalent to the state vector ${|\psi_T\rangle}$ having a finite correlation length in the long direction, which is true throughout the phases we consider.

In principle, we could have chosen to define our PEPS by replacing the Pauli $Z$ in the $|-\rangle$-component of $A_T$ with an $X$ or $Y$. It turns out that these cases are already included in the current definition. For example, replacing the $Z$ with an $X$ is equivalent to conjugating $T$ by Hadamard gates on every leg, represented as $t\mapsto hth^{-1}$ where
\begin{equation} \label{eq:hadamard}
h=\begin{pmatrix} 0 & 1 \\ 1 & 0 \end{pmatrix}.
\end{equation}
Similarly, exchanging $Z$ with $Y$ is equivalent to conjugation by phase gates, $t\mapsto sts^{-1}$, where
\begin{equation} \label{eq:phasegate}
s=\begin{pmatrix} 1 & 0 \\ 1 & 1 \end{pmatrix}.
\end{equation}
Were we to also include a Pauli operator in the $|+\rangle$-component of $A_T$, this would be equivalent to one of the three cases discussed above, up to rephasing of $T$, which is unimportant to us. Hence, we can use the definition of $A_T$ given by Eq.~(\ref{eq:qcapeps}) without loss of generality.

For the time being, we restrict our attention to \textit{simple CQCA}, which we define to be CQCA that have the form
\begin{equation} \label{eq:simpleqca}
t=\begin{pmatrix} \mathrm{Tr}(t) & 1 \\ 1 & 0 \end{pmatrix}.
\end{equation}
We make this restriction based on the fact that only simple CQCA define MPS tensors $\mathcal{A}_T$ that are injective for all $N$, as proven in Appendix \ref{app:simpleqca}. In Sec. \ref{sec:computation} we will slightly alter the way in which we construct PEPS from CQCA, and in this construction we will no longer need to impose any restriction on the choice of CQCA.

As follows straightforwardly from its definition (Eqs.~(\ref{eq:qcapeps},\ref{eq:ringblock})), the tensor $\mathcal{A}_T$ has the symmetries
\begin{align}\label{eq:ringsymms}
\includegraphics[scale=0.9,valign=c]{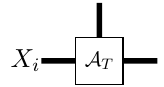}&=\includegraphics[scale=1,valign=c,raise=0.17cm]{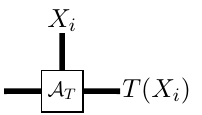}\ , \nonumber \\
\includegraphics[scale=1,valign=c]{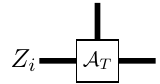}&=\includegraphics[scale=1,valign=c,raise=0.1cm]{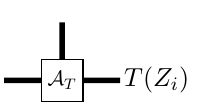}.
\end{align}
That is, a Pauli $Z$ acting in the virtual space passes through each ring freely and is transformed by the QCA $T$, while a virtual $X$ passes through with the help of a physical $X$ operator. Now, if $T$ represents a CQCA, $T(Z_i)$ and $T(X_i)$ are products of Pauli operators. Hence, they can be pushed through the next ring of tensors as well, possibly with the application of more physical $X$ operators. Since we impose periodic boundary conditions, the CQCA will have a finite period $L$ such that $t^L=\mathbb{I}$. In general, $L$ is a complicated function of the circumference $N$ of the torus, a point which we return to in Sec.~\ref{sec:period}. If we push any Pauli operator in the virtual space through $L$ rings of the PEPS, it will be mapped to itself, leading to the symmetry
\begin{align} \label{eq:qcacyclesymm}
&\hspace{0.75cm} \includegraphics[scale=1,valign=c,raise=0.35cm]{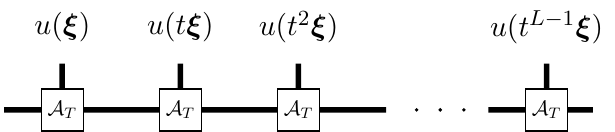} \nonumber \\
&\centerline{=} \nonumber \\
&\includegraphics[scale=1,valign=c]{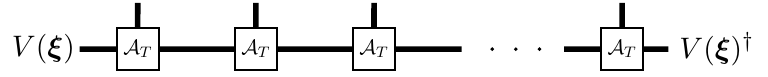},
\end{align}
for all $\bm{\xi}\in \mathbb{Z}_2^N\times\mathbb{Z}_2^N$, where 
\begin{equation} \label{eq:onsitesymm}
u(\bm{\xi}):=\bigotimes_{i=1}^N X_i^{\xi^X_i},
\end{equation} and $V(\bm{\xi})$ is defined in Eq.~(\ref{eq:pauli}). 
Hence, if we set $M=kL$, $k\in\mathbb{N}$, our state satisfies
\begin{equation}
U_T(\bm{\xi})^{\otimes k}{|\psi_T\rangle}={|\psi_T\rangle}
\end{equation}
 where the symmetry representation $\bm{\xi}\mapsto U_T(\bm{\xi})^{\otimes k}$ is defined as
 \begin{equation} \label{eq:cyclesymmdef}
 U_T(\bm{\xi}):= u(\bm{\xi})\otimes u(t\bm{\xi})\otimes\dots\otimes u(t^{L-1}\bm{\xi}). 
 \end{equation}
As shown in Appendix \ref{app:simpleqca}, $\bm{\xi}\mapsto U_T(\bm{\xi})$ is a faithful representation of $\mathbb{Z}_2^N\times\mathbb{Z}_2^N$ when $T$ is simple.

The unitary $U_T(\bm{\xi})$ is more general than the usual global on-site symmetry operator. Rather than acting the same way on each site in the lattice, the representation ``cycles'' with a period $L$. Such symmetries have been called \emph{$L$-cycle symmetries} in Ref.~\cite{Sauerwein2018}.
If we block our PEPS into large blocks of size $N\times L$, $U_T(\bm{\xi})^{\otimes k}$ becomes a standard global symmetry that acts in the same way on each block. In this way, we can look at the conventional 1D SPT order protected by $U_T(\bm{\xi})$. Since $\xi\mapsto V(\bm{\xi})$ forms a projective representation of $\mathbb{Z}_2^N\times \mathbb{Z}_2^N$, the fixed-point PEPS have non-trivial quasi-1D SPT order with respect to this symmetry~\cite{Pollman2010,Chen2011,Schuch2011}. As discussed in Sec.~\ref{sec:sspt}, this means they also have non-trivial 2D SPT order with respect to the same symmetries.

\vspace{5mm}
\noindent
\textit{Examples.} Let us study some examples of the fixed-point PEPS we have constructed. First, consider the CQCA $T_g$ from Eq.~(\ref{eq:tg}). In this case, the fixed-point PEPS defined by Eq.~(\ref{eq:qcapeps}) represents the 2D cluster state (see Appendix \ref{sec:stabilizers} for a proof of this and of the following examples). The corresponding $L$-cycle symmetry has the form of cone-like operators with $L=N$, as pictured in Fig.~\ref{fig:conesymms}. The SPT phase defined by these symmetries is exactly the cluster phase considered in Ref.~\cite{Raussendorf2018}. 

For an example using a CQCA with fractal behavior, consider the CQCA $T_f$ defined by the relations
\begin{equation} \label{eq:tf}
T_f(X_i)=X_{i-1}Y_iX_{i+1}\,, T_f(Z_i)=X_i.
\end{equation}
The corresponding fixed-point PEPS turns out to be the cluster state again, 
with the phase gate 
\begin{equation} \label{eq:sgate}
S=\begin{pmatrix} 1 & 0 \\ 0 & i \end{pmatrix}
\end{equation} applied to each site. The $L$-cycle symmetries also have a fractal structure, see Fig.~\ref{fig:fractal} for an example with $N=512$, $L=\frac{3}{2}N=768$. This shows that the cluster state also has SPT order under fractal symmetries that are tensor products of Pauli-$Y$ operators (since $SXS^\dagger=Y$).

For a periodic CQCA, we choose $T_p$ defined as
\begin{equation} \label{eq:tp}
T_p(X_i)=Z_i\,, T_p(Z_i)=X_i.
\end{equation}
That is, $t_p=h$ where $h$ is as defined in Eq.~\eqref{eq:hadamard}. This corresponds to a stack of decoupled 1D cluster states, and the $L$-cycle symmetries are simple horizontal lines. 

These examples are all states that can be defined by a local stabilizer group \cite{Gottesman1997}, which is a feature common for any fixed-point PEPS constructed by Eq.~(\ref{eq:qcapeps}), as is demonstrated in Appendix \ref{sec:stabilizers}. In fact, the fixed-point PEPS defined by simple CQCA are graph states \cite{Hein2004}, see Appendix \ref{app:simpleqca}. Hence, every state constructed by Eq.~(\ref{eq:qcapeps}) is the unique ground state of a gapped, exactly solvable Hamiltonian.

\begin{figure}
\centering
\includegraphics[scale=1.175]{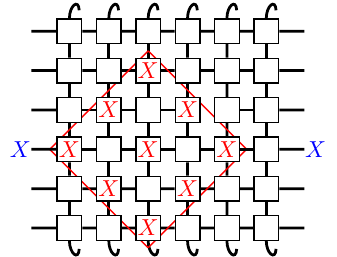} \hfill
\includegraphics[scale=1.175]{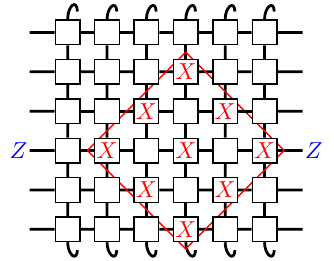}
\caption{Two generators of the symmetries of the cluster phase defined by the CQCA $T_g$ (Eq.~(\ref{eq:tg})) on a torus of circumference $N=6$. The other generators are vertical translates. The corresponding virtual representations for each generator are shown at the edges. Note that a Pauli $X$ operator drawn inside a PEPS tensor acts on the physical leg of the tensor, which is not explicitly shown.}
\label{fig:conesymms}
\end{figure}

\section{SPT order with $L$-cycle symmetries} \label{sec:cyclesymms}

Above, we have used CQCA to define fixed-point PEPS with quasi-1D SPT order protected by $L$-cycle symmetries. We would now like to investigate the corresponding SPT phases that surround these fixed-points. It is important to note that these are indeed gapped phases of matter, since the fixed-point PEPS have gapped parent Hamiltonians 
\cite{Schuch2011}
(with a uniform gap $\Delta> 0$ that is independent of the  system size, see Appendix \ref{sec:stabilizers}). 

To begin, we first need a better understanding of $L$-cycle symmetries in 1D systems.
In general, we consider a (quasi) 1D chain of $d$-level systems with length $M=kL$ which is invariant under $L$-cycle symmetries of the form $U(g)^{\otimes k}{|\psi\rangle}={|\psi\rangle}$ where
\begin{equation} \label{eq:mps2}
{|\psi\rangle}=\sum_{j_1,\dots,j_M=1}^d \mathrm{Tr}\left( \mathcal{A}^{j_1}\dots \mathcal{A}^{j_M}\right) {|j_1,\dots,j_M\rangle},
\end{equation}
and
\begin{equation} \label{eq:cyclesymm}
U(g):=u(g)\otimes u(\phi(g))\otimes\dots\otimes u(\phi^{L-1}(g)),
\end{equation}
with $g\mapsto u(g)$ being a $d$-dimensional unitary representation of a group $G\ni g$, and $\phi$ an automorphism of $G$ with $\phi^L=\mathbb{I}$. $g\mapsto u(g)$ alone need not be a faithful representation, but the whole cycle $U(g)$ should be faithful (otherwise $G$ should be redefined such that $U(g)$ becomes faithful). 
The symmetries found in the previous section fall under this definition.

Again, we can block $L$ consecutive sites into one larger site such that $U(g)^{\otimes k}$ acts in the same way on each block. 
If our state vector ${|\psi\rangle}$ is invariant under this symmetry, it is well-known that the blocked MPS tensors must satisfy the
relation~\cite{Perez-Garcia2008},
\begin{align} \label{eq:blockcyclesymm}
&\hspace{0.76cm} \includegraphics[scale=1,valign=c,raise=0.3cm]{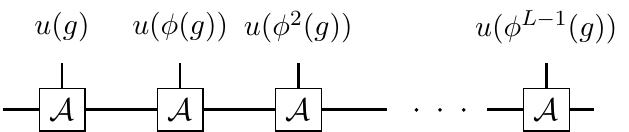} \nonumber \\
&\centerline{=} \nonumber \\
&\includegraphics[scale=1,valign=c]{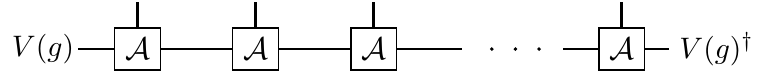} .
\end{align}
Therein, $g\mapsto V(g)$ is a projective representation of the group $G$, satisfying $V(g)V(h)=\omega (g,h)V(gh)$ for a cocycle 
$\omega $~\cite{Schuch2011}. The SPT order of ${|\psi\rangle}$ with respect to the symmetry $U(g)$ can be determined by $\omega$~\cite{Schuch2011}. For finite Abelian groups, which we will focus on here, there is a particularly important type of SPT phase called a maximally non-commutative phase~\cite{Else2012}. These phases satisfy the property 
\begin{align}
\{g|\omega(g,h)=\omega(h,g)\ \forall h\in G\}=\{e\}, 
\end{align}
and they have the largest edge-mode degeneracy of all phases for a given $G$~\cite{Marvian2017}. Another important property of these phases is that the decomposition of $V(g)$ into irreducible representations is made of up many copies of a single irrep $\widetilde{V}(g)$, such that we have the decomposition 
\begin{align}
V(g)=\mathbb{I}\otimes \widetilde{V}(g)
\end{align}
$\forall g\in G$~\cite{Berkovich1998}. 
Our first technical result is the following theorem.

\begin{Theorem}[Normal form of MPS in maximally non-commutative SPT phase] \label{theorem:1}
Any state vector ${|\psi\rangle}$ on a ring of length $kL$ that is in a maximally non-commutative SPT phase with respect to an $L$-cycle symmetry representation (Eq.~(\ref{eq:cyclesymm})) of a finite Abelian group $G$ admits an MPS representation of the form
\begin{equation} \label{eq:qcamps}
\mathcal{A}_{[l]}^j=B_{[l]}^j\otimes  (C^j\Phi),
\end{equation}
for suitable tensors $B_{[l]}^j$. Therein, $[l]$ is a site index such that $[l]=[l+L]$, $\Phi$ is uniquely defined by the relation $\widetilde{V}(g)=\Phi^\dagger \widetilde{V}(\phi(g))\Phi$, and $C^j=\widetilde{V}(g_j)$ for some $g_j\in G$. Throughout the phase, $C^j$ and $\Phi$ remain constant, while $B^j_{[l]}$ varies.
\end{Theorem}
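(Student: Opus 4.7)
The plan is to reduce the claim to the known normal form for maximally non-commutative SPT phases by blocking $L$ consecutive sites into a composite site, applying the standard structure theorem to the blocked tensor, and then carefully unblocking while accounting for how the automorphism $\phi$ shuffles the projective representation along the chain.

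First I would block $L$ elementary tensors into a single blocked MPS tensor $\widetilde{\mathcal{A}}^{\mathbf{j}}=\mathcal{A}_{[0]}^{j_0}\cdots\mathcal{A}_{[L-1]}^{j_{L-1}}$, so that the $L$-cycle symmetry of Eq.~(\ref{eq:cyclesymm}) becomes a genuine on-site symmetry on the blocked ring. Invariance then forces Eq.~(\ref{eq:blockcyclesymm}) to hold with a projective representation $V(g)$ of cocycle $\omega$. Because the phase is maximally non-commutative, $V(g)=\mathbb{I}\otimes \widetilde{V}(g)$ with $\widetilde{V}$ the unique irreducible projective representation carrying $\omega$, and the standard normal-form theorem for such phases decomposes the blocked tensor as $\widetilde{\mathcal{A}}^{\mathbf{j}}=\widetilde{B}^{\mathbf{j}}\otimes \widetilde{V}(g_{\mathbf{j}})$ for some $g_{\mathbf{j}}\in G$, with all non-universal content absorbed into $\widetilde{B}^{\mathbf{j}}$.

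Next I would unblock by tracking the virtual operators on the internal bonds inside a block. Applying the pulling-through relation one site at a time, the virtual operator on the $[l]$-th internal bond must transform $u(\phi^l(g))$ from the physical side and compose, after $L$ steps, into $\widetilde{V}(g)$ on the external bond; this forces it to be equivalent to $\widetilde{V}(\phi^l(g))$. Since $\widetilde{V}(g)$ and $\widetilde{V}(\phi(g))$ are irreducible projective representations of $G$ with the same cocycle (because $\phi$ is a group automorphism), Schur's lemma for projective representations gives a unitary $\Phi$, unique up to a phase, satisfying $\widetilde{V}(g)=\Phi^\dagger \widetilde{V}(\phi(g))\Phi$; the constraint $\phi^L=\mathbb{I}$ is compatible with $\Phi^L\propto \mathbb{I}$. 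Inserting $\Phi^{-1}\Phi$ on each internal bond and absorbing one factor into each elementary tensor converts the blocked normal form into a site-by-site expression of the form $\mathcal{A}_{[l]}^{j}=B_{[l]}^{j}\otimes (C^{j}\Phi)$, with the projective-representation factor at each site rotated back into a common frame so that $C^j=\widetilde{V}(g_j)$ does not depend on $[l]$.

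The main obstacle will be the bookkeeping in the unblocking step: one must verify that the gauge freedom on internal bonds can be fixed consistently around the whole ring of $k$ blocks, that the intertwiner $\Phi$ can be chosen globally (a mild consistency check on $\Phi^L$ and the overall phase), and that the inductive argument distributing $\widetilde{V}(g_{\mathbf{j}})$ across its $L$ constituent sites genuinely yields a factor $\widetilde{V}(g_{j_l})$ depending only on $j_l$ at position $[l]$, with all cross-site correlations pushed into $B_{[l]}^{j}$. Once this is settled, stability of $C^j$ and $\Phi$ throughout the phase follows because $\widetilde{V}$ is determined by the cocycle $\omega$, which is a topological invariant of the phase, while $B_{[l]}^{j}$ carries all non-universal deformations and is therefore free to vary.
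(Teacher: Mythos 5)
Your overall architecture (block, invoke maximal non-commutativity to get the unique irrep $\widetilde V$, identify $\Phi$ by Schur's lemma, then distribute the structure over the $L$ sites) matches the paper's, but there is a genuine gap at the pivot of your argument, and it is more than the ``bookkeeping'' you flag at the end. The blocked normal form $\widetilde{\mathcal{A}}^{\mathbf{j}}=\widetilde B^{\mathbf{j}}\otimes\widetilde V(g_{\mathbf{j}})$ constrains only the \emph{product} of the $L$ elementary tensors; it carries no information about the individual factors, which are defined only up to arbitrary invertible gauge transformations $M_l(\cdot)M_l^{-1}$ on the internal bonds. When you say ``applying the pulling-through relation one site at a time,'' you are assuming the existence of a single-site symmetry relation --- unitaries $W_l(g)$ on the internal bonds such that acting with $u(\phi^l(g))$ on one physical site pulls through to $W_l(g)$ and $W_{l+1}(g)^\dagger$ --- and this is precisely the statement that must be proven; it cannot be read off from the blocked normal form. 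The paper obtains it from the result of Ref.~\cite{Molnar2018} applied to the original $L$-cycle symmetry condition (Eq.~(\ref{eq:fundtheorem})), and then uses injectivity of the blocked tensor, comparing two different concatenations of the local relations around a block, to pin down $W_0(g)\propto V(g)$ and $W_1(g)\propto V(\phi(g))$, yielding Eq.~(\ref{eq:singletenssycle}). Only once that single-tensor relation is in hand can the Else-type argument be run site by site.

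Two further points. First, your claim that $g\mapsto\widetilde V(\phi(g))$ has ``the same cocycle'' as $\widetilde V$ because $\phi$ is an automorphism is not correct: its cocycle is the pullback $\omega\circ(\phi\times\phi)$, which a priori need not even be cohomologous to $\omega$. In the paper the cohomology equivalence is \emph{derived} from the single-tensor relation, and one still needs correction phases $\gamma(g)$ before Schur's lemma produces the intertwiner $\Phi$. Second, the single-tensor relation holds only up to scalars $\lambda(g)$ that are not universal within the phase; these give rise to a matrix $\Lambda$ that must be pushed to the block boundary, where the constraint $\prod_l\lambda(\phi^l(g))=1$ inherited from the blocked relation forces $\Lambda_*\propto\mathbb{I}$. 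This step is not optional: it is exactly what breaks translation invariance within a block and produces the $[l]$-dependence of $B^j_{[l]}$ appearing in the statement of the theorem, which your sketch as written would not reproduce.
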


This result is the essentially analogous to that given in Ref.~\cite{Else2012} when extended to $L$-cycle symmetries. $\Phi$ and $C^j$ are hence protected by the symmetry $g\mapsto U(g)$, in that they are present in a subspace of the virtual space of the MPS representation of all states in the phase. $C^j$ are completely defined by the on-site representation $g\mapsto u(g)$ of Eq.~(\ref{eq:cyclesymm}), while the transformation $\Phi$ contains the information about the structure of the $L$-cycle symmetry. Hence, the same patterns that define the $L$-cycle symmetry also appear in the entanglement structure.

\vspace{5mm}
\noindent
\textit{Proof.} We now sketch the proof of Theorem \ref{theorem:1}, with full details given in Appendix \ref{app:cycle}. The first step is to use the results of Ref.~\cite{Molnar2018} to show that the symmetries of the block tensor (Eq.~(\ref{eq:blockcyclesymm})) imply the following symmetries of the single tensor,
\begin{equation} \label{eq:singletenssycle}
\includegraphics[scale=1,valign=c,raise=0.25cm]{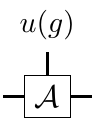}\ \propto\includegraphics[scale=1,valign=c]{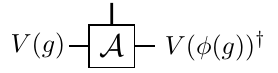}.
\end{equation}
Using the fact that the SPT phase is maximally non-commutative, we can then rewrite this equation as
\begin{equation} 
\includegraphics[scale=1,valign=c,raise=0.25cm]{singletenscycle_a.pdf}\ =\includegraphics[scale=1,valign=c]{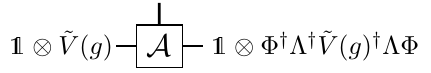},
\end{equation}
where $\Lambda$ is a matrix that encodes the proportionality constant that hides in Eq.~(\ref{eq:singletenssycle}), which it is not consistent throughout the phase. From this, we can continue as in Ref.~\cite{Else2012} to constrain the matrices $\mathcal{A}^j$ into the form
\begin{equation}
\mathcal{A}^j=B^j\otimes  (C^j\Lambda \Phi).
\end{equation}
Finally, we get rid of the non-universal $\Lambda$, at the cost of partially losing translation invariance of our MPS representation, and we are left with Eq.~(\ref{eq:qcamps}). $\Box$

\vspace{5mm}
We can now apply Theorem~\ref{theorem:1} to the constructions found in the previous section. Given a CQCA $T$, we get a fixed-point PEPS which is in a non-trivial quasi-1D SPT phase with respect to the $L$-cycle symmetry $\bm{\xi}\mapsto U_T(\bm{\xi})$ for $\bm{\xi}\in \mathbb{Z}_2^N\times \mathbb{Z}_2^N$. Since $V(\bm{\xi})$ are $N$-qubit Pauli operators, it follows that this SPT phase is maximally non-commutative. Hence, Theorem~\ref{theorem:1} applies, and if we consider an arbitrary state within this SPT phase, it admits a quasi-1D MPS representation of the form Eq.~(\ref{eq:qcamps}), which in this case reads
\begin{equation} \label{eq:qcamps4}
\mathcal{A}^\mathbf{j}_{[l]}=B^\mathbf{j}_{[l]}\otimes (C^\mathbf{j}T),
\end{equation}
where $\mathbf{j}=(j_1,j_2,\dots, j_N)$ denotes the state of the $N$ spins along a ring with $j=0$ (1) corresponding to the vectors 
${|+ \rangle}$ (${|-\rangle}$), and $C^\mathbf{j}=\bigotimes_{i=1}^N (Z_i)^{j_i}$. Notice that setting $B^{\mathbf{j}}_{[l]}=1$ leaves us with $\mathcal{A}^\mathbf{j}_{[l]}=C^\mathbf{j}T$, which is the fixed-point PEPS defined in Eq.~(\ref{eq:qcapeps}).

Thus, when CQCA are used to define subsystem symmetries, the presence of non-trivial SPT order under those symmetries is equivalent to the presence of the CQCA on the virtual level of the tensor network. This shows that the same structure appearing in the subsystem symmetry also appears in the entanglement structure found throughout the phase. This correspondence between QCA and SPT order protected by subsystem symmetries is the first major result of this work.

\section{Relation to subsystem SPT order} \label{sec:sspt}

In this section, we re-cast our phases, which are so far only defined in the quasi-1D picture, as genuine 2D phases of matter protected by subsystem symmetries. We then show that, for phases defined by glider CQCA, the full $L$-cycle symmetry group is not necessary to protect the phase, and a subgroup of rigid line-like symmetries is sufficient.

\subsection{From quasi-1D to 2D}

Let us be clear about what is meant by quasi-1D SPT order and how it differs from a genuine 2D SPT order. The difference is in the notion of locality. SPT order can be defined as equivalence classes of states under finite depth quantum circuits which respect the symmetry~\cite{Chen2013}; two states are in the same SPT phase if they can be transformed into each other by such a circuit, which corresponds to a quasi-adiabatic evolution along a path of gapped, local Hamiltonians \cite{RevModPhys.89.041004}. 
Here, `local' depends on the dimensionality of the system. For our quasi-1D scenario, the evolution must be local between different rings of the torus (the long direction of the torus), but it may be non-local along the rings (the skinny direction around the torus). When we promote to a true 2D scenario, we enforce locality in both directions. It is important to stress that the new 2D scenario again constitutes a proper gapped phase of matter, in that it embodies all states that can be reached from the gapped fixed point states via local quantum circuits that respect the symmetry. Using the techniques of Refs.\ \cite{Schuch2011,LowerBoundGap,Darwaman}, one can show a lower bound to the gap of the parent Hamiltonian in a vicinity of the fixed point model. Indeed, one can perturb the local tensors up to a constant error in the operator norm and remain in the same phase of matter.

From each fixed-point PEPS, we define the corresponding quasi-1D (2D) SPT phase as all states that can be reached from the fixed-point PEPS via symmetric finite depth quantum circuits which are local in the quasi-1D (2D) sense. Note that, since any circuit that is local in the 2D sense is also local in the quasi-1D sense, the 2D SPT phases are contained within the quasi-1D SPT phases. Therefore, the results on quasi-1D phases derived in the previous section, Eq.~(\ref{eq:qcamps4}) in particular, also hold throughout the corresponding 2D phase defined by the same symmetries. One implication of this is the presence of zero-energy edge modes of dimension $2^N$~\cite{Marvian2017} throughout the phase. This exponentially growing edge degeneracy was noted in Refs.~\cite{You2018,Devakul2018} as a signature of subsystem SPT phases. In Sec.\ \ref{sec:computation}, we encode logical information in these protected edge modes for the purposes of computation. 

We note an important distinction between the quasi-1D and 2D scenarios. As mentioned in Sec. \ref{sec:cyclesymms}, 1D SPT phases are classified by a symmetry group $G$, a representation $g\mapsto U(g)$, and a cocycle $\omega$. Often, the representation is not considered important to the classification of 1D SPT phases~\cite{Schuch2011, Chen2011,Verresen2017} because one can construct a path of Hamiltonians that smoothly interpolates between any two \textit{local} symmetry representations without closing the gap. So 1D SPT phases that only differ by the symmetry representation defining them are often said to be equivalent. This is precisely the case for the SPT phases considered here: When regarded as quasi-1D SPT phases, they differ only by the $L$-cycle symmetry representation $\bm{\xi}\mapsto U_T(\bm{\xi})$. Therefore, as quasi-1D phases, they may all be seen as equivalent. This argument breaks down in the 2D scenario, since the interpolation can act non-locally on blocks of size $N\times L$, violating the 2D notion of locality. So SPT phases defined by different CQCA can be distinct in the genuine 2D scenario. Indeed, as we saw above, defining the symmetry by the CQCA $T_p$ (Eq.~(\ref{eq:tp})) leads to an SPT phase built around a stack of decoupled 1D cluster states, while the phase defined by the CQCA $T_g$ (Eq.~(\ref{eq:tg})) is built around the 2D cluster state. These two phases are physically distinct, as discussed in the next subsection.

\subsection{Line-like symmetries protect glider CQCA}

In the case of glider CQCA, promoting to a 2D notion of locality allows us to equivalently define our phases in terms of rigid line-like symmetries acting on 1D subsystems. In general, the symmetries defined by glider CQCA are shaped like cones. This follows from 
Eq.~(\ref{eq:cayley}) which, for glider CQCA, takes the form $t^2=(u^c+u^{-c})t+\mathbb{I}$. Thus, if we apply $T$ to any single-qubit Pauli $P_i$, we find 
\begin{equation}
T(T(P_i))=T^2(P_i)=T(P_{i-c})P_iT(P_{i+c}).
\end{equation}
We can continue in this fashion to see that powers of $T$ can be expressed as cones which expand until they wrap all the way around the torus, after which they contract back to a point (see Fig.~\ref{fig:glidercones} for the case $P_i=Z_i$).

\begin{figure}[t]
\centering
\subfigure[]{\label{fig:glidercones}
\begin{adjustbox}{width={0.85\linewidth},totalheight={\textheight},keepaspectratio}
\begin{tabular}{|c|c|c|c|c|c|c|}
\hline
  &      &      &      &      &      &   \\ \hline
  &      &      & $T(Z)$ & $Z$    &      &   \\ \hline
  &      & $T(Z)$ & $Z$    & $T(Z)$ & $Z$    &   \\ \hline
~~$Z$~~  & $T(Z)$ & $Z$    & $T(Z)$ & $Z$    & $T(Z)$ & ~~$Z$~~  \\ \hline
  &      & $T(Z)$ & $Z$    & $T(Z)$ & $Z$    &   \\ \hline
  &      &      & $T(Z)$ & $Z$    &      &   \\ \hline
\end{tabular}
\end{adjustbox}}
\subfigure[]{\label{fig:gliderlines}
\begin{adjustbox}{width={0.85\linewidth},totalheight={\textheight},keepaspectratio}
\begin{tabular}{|c|c|c|c|c|c|c|}
\hline
     &      & $T(Z)$ & $Z$    &      &      &      \\ \hline
     & $T(Z)$ & $Z$    &      &      &      &      \\ \hline
$T(Z)$ & $Z$    &      &      &      &      & $T(Z)$ \\ \hline
$Z$    &      &      &      &      & $T(Z)$ & $Z$    \\ \hline
     &      &      &      & $T(Z)$ & $Z$    &      \\ \hline
     &      &      & $T(Z)$ & $Z$    &      &      \\ \hline
\end{tabular}
\end{adjustbox}}
\caption{Representation of the general appearance of cones (a) and lines (b) in glider CQCA, for $c=1$. Each column represents a product of Pauli operators on $N=6$ qubits with periodic boundaries, and advancing to the right is equivalent to one application of $T$. After $6$ steps, any operator returns to itself.}
\end{figure}

This picture also gives us an easy way to understand the existence of gliders, see Fig.~\ref{fig:gliderlines}. The gliders lead to line-like symmetries of the state, as pictured in Fig.~\ref{fig:linesymms}. The line symmetries can be generated as products of the cone symmetries, but the converse is not true on a finite torus. Indeed, the line symmetries form a $\mathbb{Z}_2^{2(N-1)}$ subgroup of the full $\mathbb{Z}_2^{2N}$ symmetry group of cones in the cluster phase.

\begin{figure}[t]
\centering
\includegraphics[scale=1.075,valign=c]{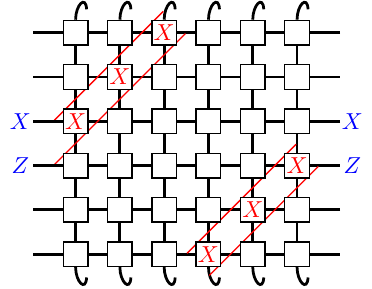}\hfill
\includegraphics[scale=1.075,valign=c]{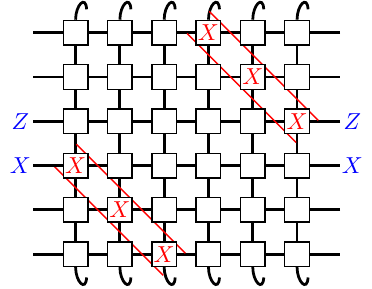}
\caption{The gliders of a CQCA define line-like symmetries of the PEPS, as pictured here for the CQCA $T_g$ of Eq.~(\ref{eq:tg}) which defines the cluster phase. The corresponding representation in the virtual space is shown on the edges.}
\label{fig:linesymms}
\end{figure}

The claim is that, when a 2D notion of locality is enforced, the line symmetries defined by gliders are sufficient to protect the SPT order. Fig.~\ref{fig:gliderlines} shows that the operator $T(Z_i)Z_{i-1}$ defines a glider and hence a line symmetry. Likewise, $Z_i T(Z_{i-1})$ defines a glider moving in the opposite direction. Now, for any even integer $k$, we can write 
\begin{align}
&Z_i\otimes Z_{i-k}= \nonumber \\
&(Z_iT(Z_{i-1}))(T(Z_{i-1})Z_{i-2})\dots (T(Z_{i-k+1})Z_{i-k}). 
\end{align}
That is, we can write a product of non-neighbouring $Z$'s as a product of gliders. Since $Z_i$ in the virtual space corresponds to a cone symmetry in the physical space, we see that we can create pairs of cone symmetries with products of line symmetries. The same argument can be repeated for $X_i$ in place of $Z_i$.

If we separate these cones sufficiently far from each other, then each local gate in the finite depth quantum circuit will see only one of the cones. So a local gate that is symmetric under the line symmetries is also symmetric under the cone symmetries. Since we require in the definition of SPT order that each \textit{local} gate of the finite depth quantum circuit commutes with the symmetry, this implies that a circuit which is symmetric under the line symmetries is also symmetric under the cone symmetries. Therefore, the 2D phase defined by line symmetries in contained within the quasi-1D phase defined by the cone symmetries.

Definitions of 2D SPT order with symmetries acting on 1D lines have also been suggested in Refs.~\cite{Else2012a,Raussendorf2018,You2018,Devakul2018b}. In Refs.~\cite{You2018} and \cite{Devakul2018b}, the authors make a distinction between \textit{strong} and \textit{weak} subsystem SPT order under line-like symmetries. Therein, a phase is called weak if it can be trivialized by adding 1D SPT chains along the direction of the symmetry operators, while strong subsystem SPT phases are genuine 2D phases of matter that cannot be viewed as stacks of 1D SPT chains. 
For simple, periodic CQCA, the fixed-point states we construct are themselves stacks of 1D chains, so periodic CQCA correspond to weak subsystem SPT phases. On the other hand, phases defined by simple, glider CQCA are strong. This has been shown in Ref.~\cite{Devakul2018b} for the 2D cluster phase. For phases defined by other simple glider CQCA, one can use the stabilizer representation of the fixed-point PEPS (Appendix \ref{sec:stabilizers}) to see that they correspond to stacks of 2D cluster states. Since the subsystem symmetry group acts independently on each cluster state, the resulting system is still in a strong subsystem SPT phase. Finally, for fractal CQCA, there are no line-like symmetries~\cite{Gutschow2010}, and we must treat the fractal symmetries themselves as fundamental. This leads to a definition of 2D phases via fractal symmetry operators, as in Refs.~\cite{Devakul2018,Kubica2018}. Whether there exists a notion of strong versus weak phases under fractal symmetry remains unknown~\cite{Devakul2018b}.

\section{Computational power of the QCA phases} \label{sec:computation}

Now that we have defined 2D SPT phases via CQCA, and we have understood some basic properties of these phases, we move on to characterizing their computational power in measurement-based quantum computing. 

Before we can state our second main result, we need to define another class of CQCA, \textit{entangling} CQCA.
We call a CQCA $T$ entangling if one or more of the entries of the matrix $t$ is not 0 or 1.
In other words, entangling CQCA are CQCA that spread information. All glider and fractal CQCA are entangling, but not all periodic CQCA are. In particular, non-entangling CQCA are those for which $t$ can be expressed as a product of $h, h^{-1}$ and $s, s^{-1}$ as defined in Eq.~(\ref{eq:hadamard}) and Eq.~(\ref{eq:phasegate}), respectively. For this class of CQCA, we will prove our second main result:

\begin{Theorem}[Computational phases of matter] \label{theorem:2}
For every entangling CQCA, there exists a 2D SPT phase in which every state is a resource for universal MBQC, except for a possible subset of zero measure. Furthermore, the universal circuit model is simulated with polynomial overhead. 
\end{Theorem}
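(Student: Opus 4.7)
The plan is to promote the logical structure exposed by Theorem~\ref{theorem:1} to a measurement-based protocol that drives arbitrary quantum circuits on the $2^N$-dimensional protected logical subspace.

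First, at the fixed-point PEPS the ring tensor satisfies $\mathcal{A}^{\mathbf{j}}_T=C^{\mathbf{j}}T$, so measuring the $N$ physical spins of a ring in a product basis produces on the virtual space a linear map of the form $R_m\, C^{\mathbf{j}_m}\, T$, where $R_m$ is a site-dependent single-qubit rotation set by the chosen measurement angles and $C^{\mathbf{j}_m}$ is a Pauli-$Z$ byproduct determined by the outcome. Composing the $M$ ring measurements along the long direction of the torus realises exactly the layered virtual circuit of Fig.~\ref{fig:circuit}: global applications of $T$ interspersed with arbitrary on-site single-qubit rotations. Since $T$ is Clifford it conjugates the byproducts $C^{\mathbf{j}_m}$ into Paulis, which can be tracked classically and absorbed into later measurement angles by the standard MBQC feed-forward trick.

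Second, I would establish universality of the effective gate set $\mathcal{G}=\langle T,\; U_1\otimes\cdots\otimes U_N : U_i\in SU(2)\rangle$ whenever $T$ is entangling. By definition at least one entry of $t$ is neither $0$ nor $1$, so $T$ spreads some single-qubit Pauli onto a multi-site product and $T$ cannot be factored into single-site unitaries. Translational invariance of the CQCA then forces this non-trivial entangling action to be uniform, coupling every pair of neighbouring virtual qubits at the fixed CQCA range. Combined with the ability to apply arbitrary single-qubit rotations independently to each virtual qubit, this matches the hypothesis of the standard ``one multi-qubit entangler plus all single-qubit gates'' universality theorem, so $\mathcal{G}$ is dense in $SU(2^N)$ and Solovay--Kitaev delivers polynomial compilation overhead. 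The non-entangling case is excluded precisely because then $T$ is itself a product of single-site Cliffords, and $\mathcal{G}$ collapses to a subgroup of on-site unitaries.

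Third, for an arbitrary state in the phase, Theorem~\ref{theorem:1} gives $\mathcal{A}^{\mathbf{j}}_{[l]}=B^{\mathbf{j}}_{[l]}\otimes(C^{\mathbf{j}}T)$. Measurements act identically on the protected factor, so the logical gate implemented is unchanged from the fixed-point case; the junk factor $B^{\mathbf{j}}_{[l]}$ only modifies the outcome-probability distribution. Adaptive measurement then continues to steer the logical subspace through the target circuit, and the injectivity of the MPS together with the uniform gap established in Appendix~\ref{sec:stabilizers} guarantees that the junk trajectory remains benign. Each ring implements one CQCA layer plus on-site gates, so simulating a depth-$D$ circuit on $N$ logical qubits costs $O(DN)$ measurements---polynomial overhead as claimed. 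The measure-zero exception comprises those phase states whose $B^{\mathbf{j}}_{[l]}$ degenerates so that some outcome needed for adaptive correction has exactly vanishing probability; these form a closed, measure-zero subvariety of the phase.

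The hardest step will be the universality argument in the second paragraph: it must cover every entangling CQCA uniformly---periodic, glider, and fractal alike---despite the markedly different propagation patterns of the three classes. A clean proof must show that translational invariance plus a single non-constant polynomial entry in $t$ always suffices to invoke the multi-qubit-entangler universality theorem, with delicate edge cases (such as rings whose size $N$ makes the effective CQCA range degenerate, or fractal CQCA whose spreading depends arithmetically on $N$) and a tight characterisation of the measure-zero exceptional set likely demanding a size-dependent analysis.
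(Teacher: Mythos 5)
Your proposal has two genuine gaps, both of which the paper must work hard to close. First, your claim that measuring a ring in a rotated product basis implements an \emph{arbitrary} on-site rotation $R_m$ on the protected factor, and that the junk tensor $B^{\mathbf{j}}_{[l]}$ ``only modifies the outcome-probability distribution,'' is false away from the fixed point. Once $B^{\mathbf{j}}_{[l]}\neq 1$, a measurement in a rotated basis entangles the junk and logical factors, and the logical action is no longer the clean rotation you write down. This is precisely why the paper's scheme (following Refs.~\cite{Stephen2017,Raussendorf2017}) only implements \emph{infinitesimal} rotations $R_{(i,l)}(d\alpha)=\exp(2id\alpha\,\nu_{(i,l)}T^{L-l+1}(Z_i))$ via perturbed measurements of a \emph{single} qubit per block of $N\times L$ rings, with state-dependent constants $\nu_{(i,l)}$ that must be measured beforehand, and why each gate must be followed by order-correlation-length many decoupling rounds in the $\{\ket{+},\ket{-}\}$ basis. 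The available gates are therefore not $\langle T, SU(2)^{\otimes N}\rangle$ but the Lie group generated by the specific Pauli strings $\{T^{l}(Z_i)\}$ (or $\{T^l(P_i)\}$ in the two-qubit-cell construction); universality then requires the explicit trick of sacrificing intermediate sites, initializing them in eigenstates of the middle of $T^2(Z_i)$ so that it reduces to a two-qubit entangler, and — for non-simple CQCA, where $T(Z_i)=X_i$ fails and single-qubit gates are not free — switching to a two-qubit-per-site PEPS construction.

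Second, your overhead count of $O(DN)$ measurements ignores that one logical time-step consumes a block of $L$ rings, where $L$ is the period of the CQCA on a ring of circumference $N$ (the smallest $L$ with $t^L=\mathbb{I}$). For fractal CQCA, $L(N)$ grows with an exponential envelope, which would destroy the polynomial-overhead claim outright. The paper resolves this by proving (Appendix~\ref{app:fractal}) that along the subsequence $N=2^k$ one always has $L=N$ or $L=\tfrac{3}{2}N$, and restricting the construction to those system sizes. You flag the fractal case as a ``delicate edge case'' in your final paragraph, but it is not an edge case: without the period analysis the theorem's second sentence is unproven for an entire class of entangling CQCA.
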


We begin with an outline of the main ideas that are needed to use an SPT phase as a resource for MBQC \cite{Stephen2017, Raussendorf2017}. There are three key features of the constructions which are relevant to the current work:
\begin{enumerate}
\item Logical qubits are encoded in the virtual space of the tensor network~\cite{Gross2007,0706.3401}. Specifically, $N$ qubits are encoded in the $2^N$-dimensional subspace of Eq.~(\ref{eq:qcamps4}) which is uniform throughout the corresponding SPT phase. One time-step of the computation is enacted by individually measuring every qubit in a \textit{block} of size $N\times L$ consisting of $L$ consecutive rings around the torus, where $L$ is the period of the CQCA (to be determined in Sec. \ref{sec:period}). Information can be initialized into this space and subsequently read-out by appropriate measurement patterns on blocks.

\item Logical gates are performed by measuring a single qubit in a block in a perturbed basis 
\begin{equation} \label{eq:tiltedbasis}
\{ {|+\rangle} + id\alpha {|-\rangle},{|-\rangle}-id\alpha {|+\rangle} \},
\end{equation}
with $|d\alpha|\ll 1$, and measuring the rest of the qubits in a block in the basis $\{ {|+\rangle},{|-\rangle} \}$. According to Eq.~(\ref{eq:qcamps4}), and the computational scheme of Ref.~\cite{Raussendorf2017}, if the qubit measured in the perturbed basis is located at site $(i,l)$ in the block, the corresponding logical gate is, up to second order in $d\alpha$,
\begin{equation} \label{eq:gates}
R_{(i,l)}(d\alpha)=\exp(2id\alpha \nu_{(i,l)} T^{L-l+1}(Z_i)),
\end{equation}
where $\{\nu_{(i,l)}\}$ is a set of constants that characterize the part of the state that is not uniform within the SPT phase. That is, they are defined in terms of the $B^\mathbf{j}_{[l]}$ from Eq.~(\ref{eq:qcamps4}). These constants can be easily measured before computation, again using local measurements only. Therefore, they may be accounted for by adjusting $d\alpha$ accordingly. If one of $\nu_{(i,l)}$ is equal to $0$, which only occurs for a subset of states of zero measure, the computation fails. The gate $R_{(i,l)}(d\alpha)$ represents an infinitesimal rotation generated by $T^{L-l+1}(Z_i)$. By composing these gates in the appropriate order, we can achieve any rotation generated by elements of the Lie algebra $\mathcal{O}_T$ which is generated by the set $\{T^l(Z_i)|i=1,\dots, N, l=1,\dots, L\}$ with linear combinations and the matrix commutator. Thus our full set of gates is given by the Lie group $\mathcal{L}_T=\exp(i\mathcal{O}_T)$. $\mathcal{L}_T$ is the same for every state in the SPT phase defined by the CQCA $T$ (to be investigated in Sec.\ \ref{sec:gates}).

\item Every non-trivial gate must be followed up by measuring a large number of blocks of qubits in the $\{ {|+\rangle},{|-\rangle}\}$ basis. The number of blocks measured is on the order of the correlation length of the system (in the long direction), which is finite by the assumption of injectivity. This serves to decouple the two virtual subspaces in Eq.~(\ref{eq:qcamps4}), which become slightly entangled after each logical gate.

\end{enumerate}
\noindent

\subsection{Period of the CQCA} \label{sec:period}

We first determine the period $L$ of our CQCA, which determines the length of a single step of computation. We recall the matrix representation $t$ of the CQCA. The task is to determine the smallest $L$ such that $t^L=\mathbb{I}$, which we do by invoking Eq.~(\ref{eq:cayley}). Since the trace of the CQCA appears here, the analysis now splits into three parts, depending on whether the CQCA is of periodic, glider, or fractal type.

\vspace{5mm}
\noindent
\textit{Periodic CQCA.} For periodic CQCA, we have $\mathrm{Tr}(t)=a$, where $a=0,1$. Using Eq.~(\ref{eq:cayley}), it is easy to check that the period is then $L=a+2$ \cite{Gutschow2010}. Note that this period is independent of the circumference $N$. In particular, the CQCA is periodic even on an infinite chain, while the glider and fractal CQCA have a finite period only when periodic boundary conditions are enforced. This is the reason why only these CQCA are called ``periodic''.

\vspace{5mm}
\noindent
\textit{Glider CQCA.} For CQCA that support gliders, the period can be determined from the cone structure in Fig.~\ref{fig:glidercones}. For $c=1$ and $N$ even, this cycle takes $N$ steps (for odd $N$, it takes $2N$ steps, but we only consider even $N$). For $c>1$, the above is still true, but there may be a smaller number $L$ also satisfying $t^L=\mathbb{I}$. We ignore this possibility for simplicity. Thus, the period of a glider CQCA on a ring of circumference $2N$ can always be taken to be $L=2N$.

\vspace{5mm}
\noindent
\textit{Fractal CQCA.} The case is more complicated for CQCA with fractal behaviour. Indeed, due to their fractal nature, the period $L$ of these CQCA can be a wildly fluctuating function of $N$. In fact, $L(N)$ can appear to have exponentially growing behavior (see Appendix \ref{app:fractal} for an example). 
This would pose a significant problem to computation. Since $N$ is essentially the number of qubits, and $L$ controls the duration of a single step of computation, an exponential relationship implies that computation time scales exponentially with the number of qubits. Thus, we could not call the resulting computational scheme universal, even if we have a full set of gates. Luckily, it turns out that this problem can be avoided. Indeed, although $L(N)$ may have an exponential envelope, it turns out that there is a subsequence of system sizes for which the relationship in linear. Specifically, when $N=2^k$, we have either $L=2^k=N$ or $L=2^k+2^{k-1}=\frac{3}{2}N$, as proven in Appendix \ref{app:fractal}.

\subsection{Determining gate set} \label{sec:gates}

We must now determine the Lie group of gates $\mathcal{L}_T$. We will show that we can construct a universal set of gates as long as $T$ is simple and entangling, which implies that $T$ is of glider or fractal type. We construct arbitrary single-qubit gates and a non-trivial two-qubit entangling gate, which together form a full gate set~\cite{Barenco1995,DeutschBarencoEkert1995,Bremner2002}. The single-qubit gates follow from the fact that $T$ is a simple CQCA, which implies that $T(Z_i)=X_i$. Then, if we set $l=1$ in Eq.~(\ref{eq:gates}), we get all $Z$-rotations on a single qubit, while setting $l=L$ gives all $X$-rotations. Together, these give a full set of single qubit gates. 

To construct an entangling gate, we generalize the technique from Ref.~\cite{Raussendorf2018}, in which a two-qubit gate was constructed from a three-qubit gate by initializing one qubit into a particular eigenstate on which the three-qubit gate acts trivially. Consider $T^2(Z_i)$ for any $i$. If $T$ is entangling, this operator must act non-trivially outside of site $i$. Then $T^2(Z_i)$ will be a product of Pauli operators supported on the interval $[i-n,i+n]$, for some minimal $n\neq 0$, which is symmetric about site $i$. The trick now is to make every $2n$-th qubit a logical qubit, and initialize the qubits in between each logical qubit into the +1 eigenstate of the middle $2n-1$ operators in $T^2(Z_i)$. This initialization can always be done since we have all single qubit gates at our disposal, as described in Ref.~\cite{Raussendorf2018}. Then, $T^2(Z_i)$ will act as a two-qubit gate on logical qubits at positions $i-n$ and $i+n$ while leaving the qubits in between unchanged. Hence, by setting $l=L-1$ in Eq.~(\ref{eq:gates}), we get a non-trivial two-qubit entangling gate on all neighbouring pairs of logical qubits which, together with the single-qubit gates, form a universal set.

For an explicit example, consider the CQCA from Eq.~(\ref{eq:tg}) with $T_g^2(Z_i)=X_{i-1}Z_iX_{i+1}$. We initialize every even numbered qubit in the ${|0\rangle}$ eigenstate and use only odd-numbered qubits as logical qubits. Then the gate $\exp(2id\alpha T_g^2(Z_{2i}))=\exp(2id\alpha X_{2i-1}Z_{2i}X_{2i+1})$ reduces to $\exp(2id\alpha X_{2i-1}X_{2i+1})$, which is an entangling gate on the logical qubits. 

\subsection{Proving computational universality} \label{sec:proof}
We have now all ingredients needed for the proof of Theorem \ref{theorem:2} for simple, entangling CQCA. Next we show how to modify the above scheme such that we can drop the condition that our CQCA is simple and instead apply it to any entangling CQCA. One advantage of using simple QCA is that they allow us to easily construct a full set of single-qubit gates. For arbitrary CQCA that are not simple, this is not as straightforward. To address this issue, we modify the way in which we construct fixed-point PEPS from CQCA. 

The new fixed-point PEPS are defined in terms of a two-qubit unit cell, labeling the two qubits within a unit cell by $a$ and $b$. The local tensor $A'_T$ has components given by,
\begin{align} \label{eq:2qubitqcapeps}
&\includegraphics[scale=1,valign=c,raise=0.24cm]{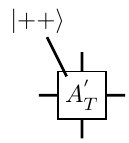}:=
\includegraphics[scale=1,valign=c]{qcapeps_I.pdf}\qquad , \quad
\includegraphics[scale=1,valign=c,raise=0.24cm]{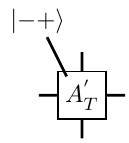}:=
\includegraphics[scale=1,valign=c]{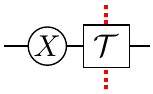},\nonumber \\
&\includegraphics[scale=1,valign=c,raise=0.24cm]{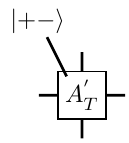}:=
\includegraphics[scale=1,valign=c]{qcapeps_Z.pdf}\,\, , \quad
\includegraphics[scale=1,valign=c,raise=0.24cm]{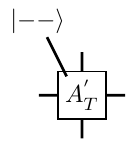}:=
\includegraphics[scale=1,valign=c]{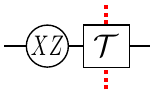}, 
\end{align}
where $\mathcal{T}$ is again the MPU representation of the CQCA. The choice of CQCA in this definition is completely free, as opposed to the previous sections where we required $T$ to be simple CQCA.

\begin{figure} 
\centering
\includegraphics[scale=1.078,valign=c]{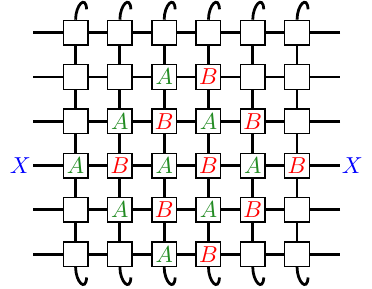}\hfill
\includegraphics[scale=1.078,valign=c]{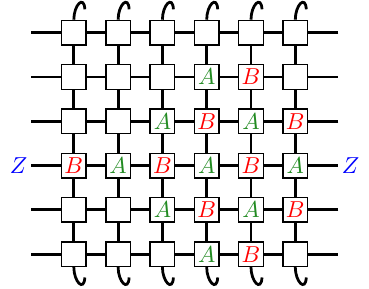} \\ \vspace{0.8cm}
\includegraphics[scale=1.078,valign=c]{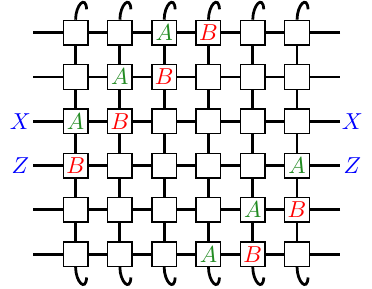}\hfill
\includegraphics[scale=1.078,valign=c]{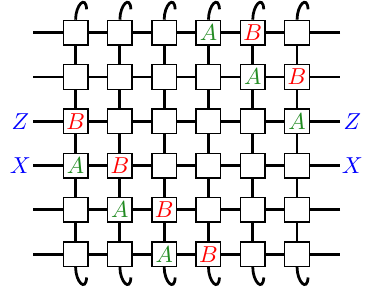}
\caption{The modified line and cone symmetries arising from the cluster CQCA $T_g$ of Eq.~(\ref{eq:tg}) when using the two qubit per-site construction. $A$ ($B$) denotes the Pauli-X operator acting on qubit $a$ ($b$) on a given site, while operators on the edges show the corresponding virtual representation. Note that the analysis in Sec.~\ref{sec:sspt} still holds in this case: the line symmetries, although widened, are still sufficient to protect the phase.}
\label{fig:2qubitsymms}
\end{figure}

The tensors $A'_T$ can be used to define the ring tensor $\mathcal{A}'_T$ as in Eq.~(\ref{eq:ringblock}). Proceeding from here in the same way as the qubit case, we can construct an $L$-cycle symmetry representation 
\begin{equation} \label{eq:twoqubitcycle}
\bm{\xi}\mapsto U'_T(\bm{\xi})=u'(\bm{\xi})\otimes u'(t\bm{\xi})\otimes\dots\otimes u'(t^{L-1}\bm{\xi})  
\end{equation}
of $\mathbb{Z}_2^N\times\mathbb{Z}_2^N$,
where the on-site representation is defined by 
\begin{equation} \label{eq:twoqubitonsite}
u'(\bm{\xi})=\bigotimes_{i=1}^N {X^a_i}^{\xi^X_i} {X^b_i}^{\xi^Z_i} ,
\end{equation}
where $X^a_i$, $X^b_i$ denotes the Pauli-X matrix acting on the qubit $a$ and $b$ at site $i$, respectively (see Fig.~\ref{fig:2qubitsymms}). It can be easily verified that $\bm{\xi}\mapsto U'_T(\bm{\xi})$ is a faithful representation of $\mathbb{Z}_2^N\times\mathbb{Z}_2^N$ as before.
The virtual representation $\bm{\xi}\mapsto V(\bm{\xi})$ is again the Pauli representation, so the resulting state has non-trivial SPT order under this $L$-cycle symmetry. We can thus prove that Eq.~(\ref{eq:qcamps4}) holds throughout the corresponding SPT phase, where $C^\mathbf{j}$ can now be any $N$-qubit Pauli, not just a product of $Z$'s. 

We can briefly compare this two-qubit construction to the original one-qubit construction of Eq.~(\ref{eq:qcapeps}). As shown in Appendix \ref{app:simpleqca}, every PEPS defined by Eq.~(\ref{eq:qcapeps}) with a simple CQCA $T$ is equivalent to the PEPS defined by Eq.~(\ref{eq:2qubitqcapeps}) with CQCA $T^2$. So this new construction includes all fixed-point PEPS, and hence all SPT phases, considered in the previous sections. Since we no longer restrict to simple CQCA, we also get new fixed-point states and phases. In particular, all fixed-point states in the previous sections were graph states, while this is not generally the case for states defined by Eq.~(\ref{eq:2qubitqcapeps}). They are nevertheless still stabilizer states (see Appendix~\ref{sec:stabilizers}), and so they are again unique ground states of gapped, local Hamiltonians.

The important distinction between the two constructions appears in their use for MBQC. Because of the different $C^\mathbf{j}$ appearing in Eq.~(\ref{eq:qcamps4}), the set of logical gates that we can execute in one step is enlarged. The computational scheme is almost identical to that described above for the single-qubit unit cell, except that we now have a choice to measure either qubit $a$ or $b$ at a given site in a perturbed basis for a logical gate. The logical gates we can execute now have the form,
\begin{equation} \label{eq:gates2}
R'_{(i,l)}(d\alpha)=\exp(2id\alpha \nu_{(i,l)} T^{L-l+1}(P_i)),
\end{equation}
where $P_i$ is either $Z_i$ or $X_i$, depending on whether qubit $a$ or $b$ is measured in the perturbed basis, respectively. Therefore, setting $l=1$ in Eq.~(\ref{eq:gates2}) already gives a full set of single-qubit rotations. We can then construct an entangling gate in a way exactly analogous to the previous case. Again, this construction can fail only if $T$ is not entangling.

With this, we have constructed a full set of gates for all CQCA $T$ which are entangling. So we can finally complete the proof of Theorem \ref{theorem:2},

\vspace{5mm}
\noindent
\textit{Proof of Theorem 2}. Based on the above discussions, we see that a) $N$ can always be scaled in such a way that the period $L$ is linear in $N$ or constant, and b) we can construct a universal set of gates on $q=\lfloor N/{2n}\rfloor $ qubits for a fixed $n$, \textit{i.e.} $SU(2^q)\subset\mathcal{L}_T$. Finally, the computational scheme of Refs.~\cite{Stephen2017,Raussendorf2017,Raussendorf2018} which we employ here has polynomial overhead for logical gates. Thus, overall, we can simulate the circuit model of quantum computation on $q$ logical qubits in $poly(q)$ time, as stated in the theorem. $\Box$

\vspace{5mm}

Within each phase, the computational power remains uniform and hence they truly constitute computational phases of matter. However, it is important to note that the measurement protocol needed for gates depends on certain details about the specific point within the phase, namely the constants $\nu_{(i,l)}$, as in similar schemes from previous works \cite{Stephen2017,Raussendorf2017,Raussendorf2018,Devakul2018a}. This means that our schemes are not necessarily robust against unknown small deformations within the phase, as would be the case in fault-tolerant schemes for quantum computing. However, the constants $\nu_{(i,l)}$ can be readily obtained via local measurements, in a small pre-computation before the actual quantum computation \cite{Raussendorf2017}, giving rise to uniform computational power across the phase. These constants are the same for each block of computation, as we have assumed translational invariance throughout.

\subsection{Periodic, entangling CQCA as universal resources for MBQC} \label{sec:periodicqca}

Theorem~\ref{theorem:2} does not apply to every periodic CQCA, and there are examples of periodic CQCA which do not lead to universal phases: consider again the CQCA $T_p$ in Eq.~(\ref{eq:tp}), which is periodic and describes decoupled 1D chains. Such a system cannot be a universal resource for measurement-based quantum computation with our methods, since we cannot create entangling gates with local measurements. This shows that the presence of zero-energy edge modes of dimension $2^N$ is not sufficient for universal MBQC, and that some additional structure is needed to allow entangling gates. However, periodic CQCA can also be entangling and hence computationally universal as stated in Theorem~\ref{theorem:2}. Surprisingly, in such cases being periodic is not a bug but rather a feature that we now investigate.

Periodic, entangling CQCA have a computational advantage over glider or fractal CQCA due to a quadratic reduction in the number of measurements per gate. This improvement stems from the constant period $L$ of periodic CQCA in contrast to a period that scales linearly in $N$ for glider or fractal CQCA (see Sec.~\ref{sec:period}).
As an example, consider the following periodic, entangling CQCA,
\begin{equation}\label{eq:entanglperiod}
t_{e}=ht_g=\begin{pmatrix} 1 & 0 \\ u+u^{-1} & 1 \end{pmatrix}.
\end{equation}
The corresponding fixed-point state defined via Eq.~(\ref{eq:2qubitqcapeps}) is a dressed cluster state which features additional qubits along horizontal lines (see Fig.~\ref{fig:dressedcluster}). The CQCA has a period of $2$ since $t_{e}^2=\mathbb{I}$. Nevertheless, it is entangling since $T(X_i)=Z_{i-1}X_{i}Z_{i+1}$ spreads information. According to our computational scheme $T(X_i)$ can be used as an entangling gate on qubits $i-1$ and $i+1$ while single qubit gates come for free in the two-qubit construction. The advantage of the periodic CQCA comes from the fact that our scheme requires all the qubits within a block to be measured in order to perform a single gate. Since the block size is $N\times L$ with $L=2$ for $t_{e}$, we gain a computational advantage over glider and fractal CQCA which require at least $N\times N$ measurements per block. 

\begin{figure}[t]
\centering
\includegraphics[scale=0.5,valign=c]{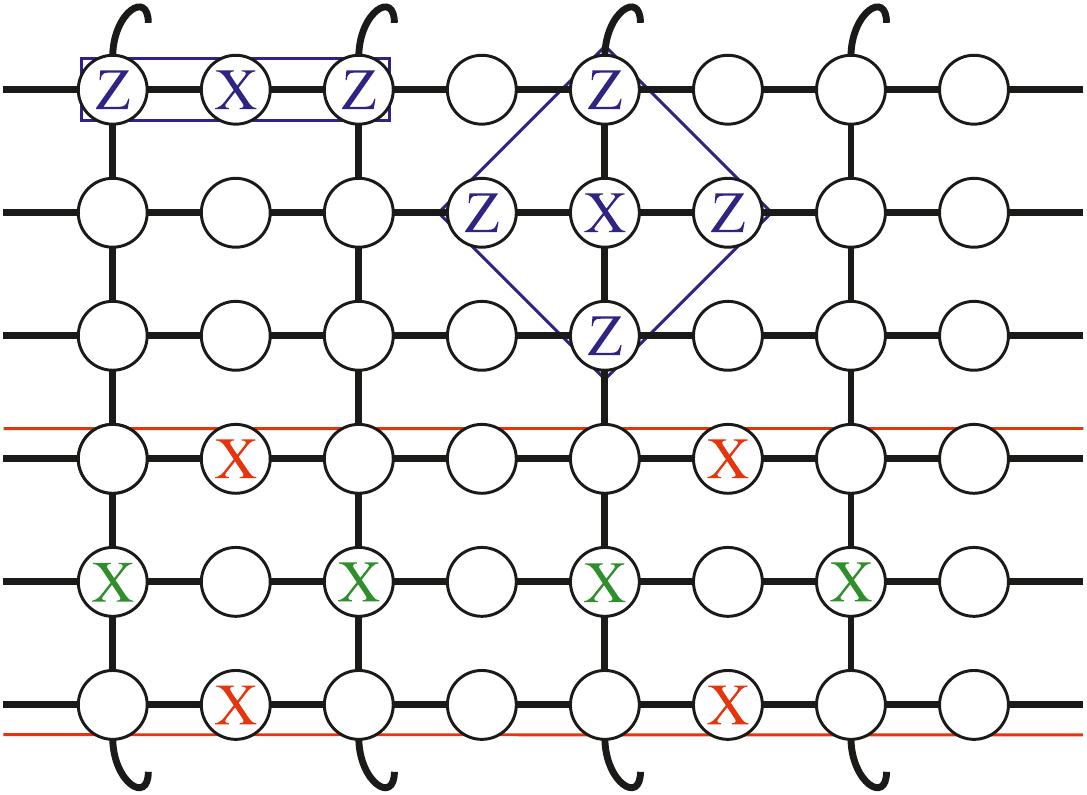}
\caption{The dressed cluster state as defined by the CQCA $t_{e}$ in Eq.~(\ref{eq:entanglperiod}). Each circle represents a qubit. Representative graph state stabilizers are depicted in blue. One symmetry generator is depicted by red and green $X$ operators acting on the $A$ and $B$ sublattice, respectively. Since $t_{e}$ is periodic, symmetries act along horizontal lines. However, because $t_{e}$ is also entangling, the width of some horizontal lines is larger than one.}
\label{fig:dressedcluster}
\end{figure}

\color{black}

\section{Discussion} \label{sec:discussion}

In this work, we have used quantum cellular automata to define subsystem symmetries and, correspondingly, SPT phases of matter. With this, we developed a new framework for identifying computationally universal phases of matter, and also for characterizing subsystem SPT order. We have determined which of the resulting phases of matter are computationally universal from the perspective of measurement-based quantum computation. The relation we uncovered between QCA and SPT order protected by subsystem symmetries in tensor networks should aid in the understanding and classification of these relatively novel phases of matter.

 Our general framework of building MBQC schemes based on CQCA remains relatively unexplored. In particular, we can choose different CQCA in order to tailor the elementary logical gate set towards the problem at hand, a fact that we did not take advantage of in our general proof of universality. For example, when two-qubit nearest neighbour gates are sufficient, periodic CQCA can be employed for a computational speedup as shown in the previous section. On the other hand, altering the elementary gate set in MBQC to include higher order entangling gates is possible in our framework, and can lead to depth-savings in gate synthesis
\cite{Gachechiladze2018}. Also, as pointed out in Ref.~\cite{Devakul2018a}, the self-similar nature of fractal CQCA can lead to the possibility of entangling far-separated qubits in a single step. It is therefore worthwhile to explore the different computational properties bestowed by different CQCA.

The flexibility of our framework could also have implications for other related protocols of quantum computation that are partially based around QCA. One example is that of secure delegated quantum computation: some protocols for universal blind quantum computation~\cite{Fitzsimons2017,Mantri2017b} employ schemes for MBQC that only make use of measurements in a single plane. This requirement is naturally fulfilled by our scheme. The same requirement has also recently appeared in proposals to demonstrate superior quantum computational power in near-term devices \cite{Bermejo-Vega2018,Hangleiter2018}.
Also related is quantum computation restricted to translationally invariant operations ~\cite{Raussendorf2005,Fitzsimons2006}, which is particularly useful for quantum computing architectures that are restricted to global control~\cite{Fitzsimons2007}. There are examples of all of the above protocols that are based on the cluster state QCA $T_g$. 
It would be interesting to investigate whether our general framework can be adapted to these settings, and in particular whether the speedup identified in Sec.~\ref{sec:periodicqca} carries over to advantages in any of the above protocols.

In the direction of classifying subsystem SPT order, one future path would make further use of the tensor network language. In Ref.~\cite{Raussendorf2018}, it was shown that the subsystem SPT order of the cluster phase is associated with a set of symmetries of a single tensor.  Likewise, the symmetries of a ring of tensors considered here can be used to determine the symmetries of the single tensor. Examining the symmetries of a single tensor has proven useful for classifying phases of matter and their physical properties in the past~\cite{Chen2011,Schuch2011,Schuch2010,Williamson2016a,Molnar2018a}, so it is likely this will be the case for subsystem symmetries as well. 

Finally, while we only considered QCA acting as Clifford circuits on chains of qubits, it is possible to generalize beyond this scenario. First, we could extend our analysis to systems of arbitrary local dimension $d$. In this case, we could realize systems with subsystem symmetry groups of the form $(G\times G)^N$ for an arbitrary finite Abelian group $G$ with $|G|=d$, and our analysis would be in terms of generalized Pauli and Clifford operators~\cite{Bermejo-Vega2014}. Our general formalism in Sec.~\ref{sec:cyclesymms} is already equipped to handle this extension, which would lead to new computationally universal phases, and may be useful for the classification of subsystem SPT phases. Similarly, we can extend our results to higher dimensions using the same mapping to a quasi-1D system. In particular, moving to 3D opens up the study of fracton topological order \cite{Haah2011,Yoshida2013,Vijay2016} and fault-tolerant MBQC \cite{Raussendorf2006,Raussendorf2007}. Finally, with some modifications, our framework should also be able to handle non-Clifford QCA, although in this case the resulting subsystem symmetries would likely not be simple products of local operators.

This work is supported by the NSERC (DTS, RR), Cifar (RR), the Stewart Blusson Quantum Matter Institute (RR), the European Union through the ERC grants TAQ (JBV, JE) and WASCOSYS (DTS), the DFG CRC 183  (JE),  and the Austrian Science Fund FWF within the DK-ALM: W1259-N27 (HPN). DTS thanks N.\ Schuch for discussions, and H.\ Dreyer for discussions and help in preparation of the manuscript. RR thanks the American Institute of Mathematics for their hospitality during the workshop ``Arithmetic golden gates'' in 2017.

\appendix

\section{Stabilizers for fixed-point PEPS} \label{sec:stabilizers}

In Sec.\ \ref{sec:qcapeps}, we build SPT phases around special fixed-point PEPS as defined by Eq.~(\ref{eq:qcapeps}). In order to gain a better intuition and to construct gapped parent Hamiltonians, we show in this section that the fixed-point PEPS are also stabilizer states with local stabilizer groups.

To start, notice that the fixed-point PEPS defined by Eq.~(\ref{eq:qcapeps}) have more symmetries than those shown in Eq.~(\ref{eq:ringsymms}). They also have the following symmetries,
\begin{equation}\label{eq:fixedptsymms}
\includegraphics[scale=1,valign=c]{ringsymm_c.pdf}=
\includegraphics[scale=1,valign=c,raise=0.16cm]{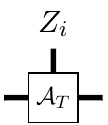}=
\includegraphics[scale=1,valign=c,raise=.08cm]{ringsymm_d.pdf}.
\end{equation}
Note that these symmetries exist only at the fixed-point, and they do not persist throughout the corresponding SPT phase. In addition to these symmetries, we also need the identity from Eq.~(\ref{eq:cayley}). In general, we have $\mathrm{Tr}(t)=\sum_{k=1}^m \alpha_k (u^{k}+u^{-k})+\beta$ for $\alpha_k,\beta\in\{0,1\}$. Stated in terms of operators, Eq.~(\ref{eq:cayley}) tells us, 
\begin{equation}\label{eq:t2}
T^2(Z_i)=Z_i\bigotimes_{k=1}^m \left[T(Z_{i-k})T(Z_{i+k})\right]^{\alpha_k}T(Z_i)^\beta.
\end{equation}
Using Eqs.~(\ref{eq:ringsymms}),~(\ref{eq:fixedptsymms}) and (\ref{eq:t2}), we can now determine the form of the stabilizers for our states,
\begin{align}
\includegraphics[scale=1,valign=c]{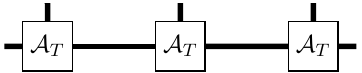}&=\includegraphics[scale=1,valign=c,raise=0.15cm]{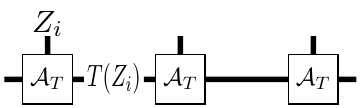}\nonumber \\
&=\includegraphics[scale=1,valign=c,raise=0.3cm]{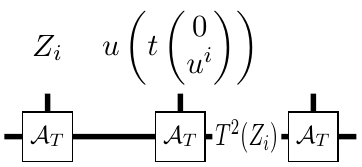} \nonumber \\
&=\includegraphics[scale=1,valign=c,raise=0.3cm]{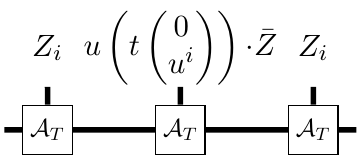}. 
\end{align}
In the first equality, we have used Eq.~(\ref{eq:fixedptsymms}). In the second, we have employed Eq.~(\ref{eq:ringsymms}) (note that $Z_i$ is represented as the polynomial vector $(0,u^i)$). In the third equality, we have used Eq.~(\ref{eq:t2}), and Eq.~(\ref{eq:fixedptsymms}) to pull the $Z_i$ and $\bar{Z}:= \bigotimes_{k=1}^m \left[Z_{i-k}Z_{i+k}\right]^{\alpha_k}Z_i^\beta$ onto the physical legs. Since $u(\xi)$ is a localized product of $X$ operators and $\bar{Z}$ is a localized product of $Z$ operators, we have derived a local stabilizer operator. We can derive an independent stabilizer $K_i$ for any site $i$ in the lattice. One can easily check that, due to the reflection symmetry of the stabilizers, all stabilizers commute as needed. So the fixed-point PEPS is the unique ground state of the Hamiltonian,
\begin{equation} \label{eq:stabham}
H=-\sum_i K_i,
\end{equation}
which is a local Hamiltonian with a uniform spectral gap. 

We can now verify the claimed examples in Section \ref{sec:qcapeps}. If we construct the stabilizers for the fixed-point PEPS defined by the CQCA $T_g$ in Eq.~(\ref{eq:tg}), then we find that they have the following form,
\begin{align}
&Z \nonumber \\
K_i = Z\  &X_i\  Z, \nonumber \\
&Z
\end{align}
which is indeed the familiar form of the cluster stabilizer \cite{Raussendorf2003}. If we use the CQCA $T_f$ in Eq.~(\ref{eq:tf}), we find
\begin{align}
&Z \nonumber \\
K_i = Z\  &Y_i\  Z, \nonumber \\
&Z
\end{align}
which corresponds to the cluster state with the operator $S$ (Eq.~(\ref{eq:sgate}))
applied to every site. Finally, with the CQCA $T_p$ of Eq.~(\ref{eq:tp}), we get stabilizers with the simple form,
\begin{equation}
K_i = ZX_i Z,
\end{equation}
which does indeed correspond to a stack of uncoupled 1D cluster states.

The fixed-point PEPS defined with respect to a two-qubit unit cell in Eq.~(\ref{eq:2qubitqcapeps}) are also stabilizer states with a local stabilizer group, as can be straightforwardly shown in the same way as above. Hence, these fixed-point PEPS are also ground states of uniformly gapped Hamiltonians.

\section{Simple CQCA} \label{app:simpleqca}

In this section we show that, when using the single-qubit unit cell construction of fixed-point PEPS (Eq.~(\ref{eq:qcapeps})), it is sufficient to consider simple CQCA, which we define to be any CQCA of the following form,
\begin{equation} \label{eq:simpleqca2}
t=\begin{pmatrix} \mathrm{Tr}(t) & 1 \\ 1 & 0 \end{pmatrix}.
\end{equation}
We then describe some properties of fixed-point PEPS defined by simple CQCA.

\subsection{Injectivity of fixed-point states}

Consider the $N$-qubit MPS tensor $\mathcal{A}_T$ defined by Eqs.~(\ref{eq:qcapeps}),(\ref{eq:ringblock}). Note that $\mathcal{A}_T$ is a different MPS tensor for each $N$. In particular, whether or not $\mathcal{A}_T$ is injective can depend on $N$. In this section only, we make the $N$ dependence explicit by using the notation $\mathcal{A}_{T,N}$ to indicate that $\mathcal{A}_T$ is defined on a ring of circumference $N$. We now prove the following result.
\begin{Prop} \label{prop:injective}
$\mathcal{A}_{T,N}$ is injective for all $N$ if and only if $t_{12}=1$, where $t$ is the polynomial matrix representation of the CQCA $T$.
\end{Prop}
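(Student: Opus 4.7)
\textit{Proof sketch.} The plan is to reduce injectivity of $\mathcal{A}_{T,N}$ to a statement about units in $\mathbb{Z}_2[u]/(u^N-1)$ via the polynomial calculus of Sec.~\ref{sec:qca}, and then dispatch the condition by elementary polynomial algebra.

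First I would unfold products of ring tensors. Since $\mathcal{A}_{T,N}^{\mathbf{j}} = C^{\mathbf{j}} T$ with $C^{\mathbf{j}} = \bigotimes_i Z_i^{j_i}$, and since $T P = T^{-1}(P)\, T$ for any Pauli $P$, commuting all factors of $T$ to the right gives
\begin{equation*}
\mathcal{A}_{T,N}^{\mathbf{j}_1}\cdots\mathcal{A}_{T,N}^{\mathbf{j}_l} = C^{\mathbf{j}_1}\, T^{-1}(C^{\mathbf{j}_2})\cdots T^{-(l-1)}(C^{\mathbf{j}_l})\, T^l.
\end{equation*}
Because $T^l$ is invertible and the $N$-qubit Pauli group spans the matrix algebra, injectivity (for some, equivalently every sufficiently large, $l$) is equivalent to the Pauli subgroup generated modulo phases by $\bigcup_{k\geq 0} T^{-k}(\{Z_1,\ldots,Z_N\})$ being the full Pauli group. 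In the polynomial representation, this subgroup corresponds to the $\mathbb{Z}_2[u]/(u^N-1)$-submodule of $(\mathbb{Z}_2[u]/(u^N-1))^2$ generated by $\{t^{-k}(0,1)^\top:k\geq 0\}$, and injectivity requires it to fill the entire rank-$2$ module.

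By Cayley--Hamilton (Eq.~(\ref{eq:cayley})), every power of $t$ is a ring-linear combination of $\mathbb{I}$ and $t$, so the submodule is generated by $(0,1)^\top$ and $t(0,1)^\top = (t_{12},t_{22})^\top$. The first generator already contributes all of $\{0\}\oplus \mathbb{Z}_2[u]/(u^N-1)$, and so the full module is obtained if and only if $t_{12}$ generates the entire ring, i.e.\ iff $t_{12}$ is a unit in $\mathbb{Z}_2[u]/(u^N-1)$. Sufficiency is then immediate: $t_{12}=1$ is a unit in every such ring, and injectivity holds for all $N$.

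The main obstacle is the necessity direction. Assuming $t_{12}\neq 1$ as a symmetric Laurent polynomial, I need to produce some $N$ for which $t_{12}$ is not a unit. The case $t_{12}=0$ is trivial. Otherwise I split on the constant term of $t_{12}$: if it vanishes, then $t_{12}(1)=0$, so $u+1$ divides the polynomial lift of $t_{12}$ in $\mathbb{Z}_2[u]$, and since $u+1$ divides $u^N-1$ for every $N\geq 1$, the gcd is nontrivial for every $N$. If the constant term is $1$ but some higher-order coefficient is also nonzero, then the lift $\bar t_{12}(u) := u^K t_{12}(u)$ has positive degree and nonzero constant term, hence admits a nonzero root $\alpha\in\overline{\mathbb{F}_2}$; taking $N$ to be the multiplicative order of $\alpha$ makes $\alpha$ a common root of $t_{12}$ and $u^N-1$, so $\gcd(t_{12},u^N-1)\neq 1$ and $t_{12}$ is not a unit in $\mathbb{Z}_2[u]/(u^N-1)$. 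Everything outside this last step is a mechanical translation between Pauli and polynomial languages; the polynomial-arithmetic case analysis is the only place any genuine work is done.
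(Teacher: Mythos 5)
Your proof is correct, and its first half is essentially the paper's: unfolding the product of ring tensors, invoking Cayley--Hamilton to collapse every power of $t$ onto the $R$-span of $\mathbb{I}$ and $t$ (for $R=\mathbb{Z}_2[u]/(u^N-1)$), and concluding that injectivity is equivalent to $t_{12}$ being a unit in $R$ is exactly Lemma~\ref{lemma:1} and the displayed computation following it; your opposite sign convention on the powers of $T$ is immaterial, since $t^{-1}=t+\mathrm{Tr}(t)\mathbb{I}$ lies in the same span. Where you genuinely diverge is in the necessity direction, i.e.\ the paper's second lemma. The paper argues analytically over $\mathbb{C}$: it writes $p(e^{i\phi})=1+2\sum_k\alpha_k\cos(k\phi)$, locates by the intermediate value theorem a point of the unit circle where the value is an even integer, and then uses $p(u^{2^n})\equiv p(u)^{2^n}$ together with finiteness of the roots on the unit circle to force that point to be a root of unity. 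Your argument stays entirely inside $\overline{\mathbb{F}_2}$: if the constant term vanishes, symmetry gives $p(1)=0$ so $u+1$ obstructs invertibility for \emph{every} $N$; otherwise the lift $u^Kp(u)$ is non-constant with nonzero constant term, hence has a nonzero root $\alpha$ in some $\mathbb{F}_{2^m}$, and the multiplicative order of $\alpha$ supplies the bad $N$. This is shorter, needs no external citation, and sidesteps the delicate question of what ``$p(\omega)\equiv 0 \pmod 2$'' should mean at a point of the unit circle produced by a continuity argument. The only cosmetic difference is that your construction yields an odd $N$ in the last case (the order of $\alpha$ divides $2^m-1$), which is harmless for the proposition as stated; if one insists on even circumferences as elsewhere in the paper, $2N$ serves equally well since $\alpha^{2N}=1$.
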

The proof follows directly from the following two lemmas:
\begin{Lemma} \label{lemma:1}
$\mathcal{A}_{T,N}$ is injective if and only if the set of operators $Z_i$ and $T(Z_i)$, $i=1,\dots, N$, and their products forms a basis for all $2^N\times 2^N$ matrices.
\end{Lemma}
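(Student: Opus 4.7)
The plan is to write out $\mathcal{A}_{T,N}^{\mathbf{j}}$ explicitly from the definitions and then reduce the injectivity condition to a statement about the Pauli group generated by $\{Z_i, T(Z_i)\}$. Contracting the physical leg of the ring tensor in Eq.~(\ref{eq:ringblock}) with a basis state $|\mathbf{j}\rangle=|j_1,\dots,j_N\rangle$ in the $\{|+\rangle,|-\rangle\}$ basis, the two cases in Eq.~(\ref{eq:qcapeps}) give
\[
\mathcal{A}_{T,N}^{\mathbf{j}}=C^{\mathbf{j}}\,T,\qquad C^{\mathbf{j}}:=\bigotimes_{i=1}^N Z_i^{j_i},
\]
with $T$ the CQCA unitary acting on the $N$-qubit virtual space (this is the fixed-point specialization of Eq.~(\ref{eq:qcamps4}), obtained by setting $B_{[l]}^{\mathbf{j}}=1$).

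Next, I would use the defining CQCA property $T(P)=T^{\dagger}PT$, equivalently $PT=T\cdot T(P)$, to migrate every occurrence of $T$ to the right in an $l$-fold product. Iterating yields, up to an overall Pauli phase,
\[
\mathcal{A}_{T,N}^{\mathbf{j}_1}\cdots\mathcal{A}_{T,N}^{\mathbf{j}_l}=C^{\mathbf{j}_1}\,T^{-1}(C^{\mathbf{j}_2})\,T^{-2}(C^{\mathbf{j}_3})\cdots T^{-(l-1)}(C^{\mathbf{j}_l})\,T^{l}.
\]
Since $T^{l}$ is unitary and therefore invertible, the complex linear span of these products coincides with the span of the prefactors, so injectivity reduces to the claim that these prefactors span the full matrix algebra $M_{2^N}(\mathbb{C})$.

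I would then argue $l=2$ already suffices. For $l=2$, the prefactors $\{C^{\mathbf{j}_1}\,T^{-1}(C^{\mathbf{j}_2}):\mathbf{j}_1,\mathbf{j}_2\in\mathbb{Z}_2^N\}$ sweep over arbitrary products of elements in $\{Z_i\}$ with arbitrary products of elements in $\{T^{-1}(Z_i)\}$ and thus realize, up to sign, every element of the multiplicative group $\mathcal{G}$ generated modulo phase by $\{Z_i,T^{-1}(Z_i)\}_i$. The Cayley-Hamilton identity, Eq.~(\ref{eq:cayley}), implies $t^{-1}=t+\mathrm{Tr}(t)\mathbb{I}$ over $\mathbb{Z}_2$, so $T^{-1}(Z_i)$ is a Pauli product of operators in $\{Z_j\}\cup\{T(Z_j)\}$, and symmetrically $T(Z_i)$ lies in the group generated by $\{Z_j,T^{-1}(Z_j)\}$. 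Hence $\mathcal{G}$ coincides with the group generated modulo phase by $\{Z_i,T(Z_i)\}_i$, and increasing $l$ does not enlarge the span because all higher iterates $T^{-k}(Z_i)$ reduce to elements of $\mathcal{G}$ by the same identity.

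Finally, distinct Pauli operators (modulo phase) are linearly independent under the Hilbert-Schmidt inner product, so $\mathcal{G}$ spans $M_{2^N}(\mathbb{C})$ if and only if it contains $4^N$ elements modulo phase, if and only if $\{Z_i,T(Z_i)\}_i$ generates the entire $N$-qubit Pauli group modulo phase. Combined with the previous steps, this is precisely the statement of the lemma. The only non-routine step is the Cayley-Hamilton reduction that equates the groups generated by $\{Z_i,T(Z_i)\}$ and $\{Z_i,T^{-1}(Z_i)\}$; once that is in place, the remainder is bookkeeping with the explicit form of $\mathcal{A}_{T,N}^{\mathbf{j}}$.
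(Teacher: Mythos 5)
Your proof is correct and follows essentially the same route as the paper's: writing $\mathcal{A}_{T,N}^{\mathbf{j}}=C^{\mathbf{j}}T$, commuting the QCA unitaries to the right, discarding the trailing unitary, and using the Cayley--Hamilton identity to reduce all higher conjugates $T^{\pm k}(Z_i)$ to products of $Z_j$ and $T(Z_j)$. The extra details you supply (the $l=2$ sufficiency via the abelian structure of the Pauli group modulo phase, and the explicit reconciliation of $T^{-1}(Z_i)$ with $T(Z_i)$) are correct refinements of steps the paper states more tersely.
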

\begin{proof}
Recall that injectivity means that, for sufficiently large $l$, the set of products $\{\mathcal{A}_{T,N}^{\mathbf{j}_1}\dots\mathcal A_{T,N}^{\mathbf{j}_l}\}_{\mathbf{j}_1,\dots,\mathbf{j}_l}$ spans the space of all $2^N\times 2^N$ matrices. Since we have $\mathcal{A}^\mathbf{j}_{T,N}=C^\mathbf{j}_N T$ with $C^\mathbf{j}_N:=\bigotimes_{i=1}^N (Z_i)^{j_i}$, we can rewrite this set as $\{C_N^{\mathbf{j}_1} T(C_N^{\mathbf{j}_2})\dots T^{l-1} (C_N^{\mathbf{j}_l}) T^l\}_{\mathbf{j}_1,\dots,\mathbf{j}_l}$ where $T^k(C_N^\mathbf{j})=T^k C_N^\mathbf{j} T^{\dagger k}$. When checking for injectivity, the $T^l$ at the end can be ignored, since it does not affect the rank of the span. Using Eq.~(\ref{eq:cayley}), we can always reduce powers of $T$, such that $T^k(C_N^\mathbf{j})$ can always be expressed as a product of $C_N^\mathbf{j}$, $T(C_N^\mathbf{j})$ and their translations. Therefore, injectivity in the present scenario means that the operators $Z_i$, $T(Z_i)$ and their products form a basis. 
\end{proof}
In particular, Lemma \ref{lemma:1} says that $\mathcal{A}_{T,N}$ is injective if and only if it is possible to obtain the operator $X_0$ as a product of $Z_i$ and $T(Z_i)$, with $X_i$ for $i\neq 0$ following from translation invariance. 

To facilitate the rest of the proof, let us move to the Laurent polynomial notation. Recall that, when considering a ring of circumference $N$, we identify $u^N=1$. We also work over the field $\mathbb{Z}_2$, so all coefficients of the polynomials will be taken modulo 2. In what follows, we use congruency ($\equiv$) to indicate equality subject to these identifications.

The condition that we can express $X_0$ as a product of $Z_i$ and $T(Z_i)$ translates into the existence of Laurent polynomials $\bm{\xi}^Z$ and $\bm{\zeta}^Z$ such that 
\begin{equation}
\begin{pmatrix} 0 \\ \bm{\xi}^Z \end{pmatrix} + t\begin{pmatrix} 0 \\ \bm{\zeta}^Z \end{pmatrix}\equiv \begin{pmatrix} 1 \\ 0 \end{pmatrix}.
\end{equation}
In particular, we have,
\begin{equation}
t_{12}\bm{\zeta}^Z\equiv 1.
\end{equation}
That is, $t_{12}$ is invertible subject to the identification $u^N=1$, which we call $N$-invertible. This can only be true for all $N$ in the trivial case, as stated by the next Lemma,
\begin{Lemma}
A symmetric Laurent polynomial $p(u)$ with coefficients in $\mathbb{Z}_2$ is $N$-invertible for all $N$ if and only if $p(u)=1$.
\end{Lemma}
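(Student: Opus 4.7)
\medskip
\noindent
\textit{Proof plan.} The forward direction is immediate: if $p(u)=1$, then $p$ equals $1$ in every quotient $\mathbb{F}_2[u,u^{-1}]/(u^N-1)$ and is its own inverse. The content is in the converse, for which I would argue the contrapositive: assuming $p(u)\neq 1$, exhibit some $N$ for which $p$ fails to be invertible modulo $u^N-1$.

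First I would dispose of the degenerate case. If $p=0$, then $p$ is trivially not a unit for any $N$. Otherwise $p$ is nonzero and not equal to $1$, so writing $p(u)=\sum_{k=-m}^{m} a_k u^k$ with $a_{\pm m}=1$ (using that the coefficient field is $\mathbb{F}_2$ and that $p$ is symmetric with $m\geq 1$), I would clear denominators by setting $q(u):= u^m p(u)\in\mathbb{F}_2[u]$. Because $a_{-m}=1$, the ordinary polynomial $q$ has nonzero constant term $q(0)=1$ and degree $2m\geq 2$.

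The key observation is then a classical finiteness argument in the ring $R:=\mathbb{F}_2[u]/(q(u))$, which is finite of cardinality $2^{2m}$. Since $q(0)\neq 0$, one has $\gcd(u,q(u))=1$, so the class of $u$ lies in the finite unit group $R^\times$ and therefore has finite multiplicative order $N$. This yields $u^N\equiv 1\pmod{q(u)}$, i.e.\ $q(u)\mid u^N-1$ in $\mathbb{F}_2[u]$. For this $N$, the Laurent polynomial $p=u^{-m}q$ (with $u^{-m}$ a unit in the Laurent setting) inherits the nontrivial common factor $q$ with $u^N-1$; equivalently, in the quotient ring $\mathbb{F}_2[u,u^{-1}]/(u^N-1)\cong\mathbb{F}_2[u]/(u^N-1)$, the class of $p$ generates the same ideal as $q$, which is a proper ideal. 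Hence $p$ is not $N$-invertible, completing the contrapositive.

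The only step that needs a moment's care is the passage between $p$ (a Laurent polynomial) and $q$ (an ordinary polynomial) when checking invertibility in $\mathbb{F}_2[u,u^{-1}]/(u^N-1)$; the point is simply that $u$ is a unit in this quotient (its inverse is $u^{N-1}$), so multiplying by $u^{\pm m}$ does not change unit-hood. I do not expect a real obstacle here, but this is the place where the argument would most naturally go wrong if one were sloppy about which ring one is working in.
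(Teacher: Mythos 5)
Your proof is correct, but it takes a genuinely different route from the paper's. The paper argues via an external criterion (citing Voorhees) that $p$ is $N$-invertible iff no $N$-th root of unity is a root of $p$ modulo $2$, and then locates such a root analytically: writing $p(e^{i\phi})=1+2\sum_k\alpha_k\cos(k\phi)$, it uses the intermediate value theorem to find a point on the unit circle where $p$ takes an even value, and then the Frobenius identity $p(u^{2^n})\equiv p(u)^{2^n}$ together with finiteness of the relevant level set to conclude that this point is a root of unity, yielding an explicit family $N=k(2^n-2^m)$ of bad circumferences. Your argument replaces all of this with a self-contained finiteness argument in $R=\mathbb{F}_2[u]/(q)$ with $q=u^mp$: since $q(0)=1$, the class of $u$ is a unit of finite multiplicative order $N$ in the finite ring $R$, whence $q\mid u^N-1$ and $p$ is a non-unit modulo $u^N-1$. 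This is more elementary (no appeal to the root-of-unity criterion, no mixing of complex evaluation with reduction mod $2$), it does not actually need the symmetry hypothesis beyond identifying the lowest-degree term, and it gives the explicit bound $N\le 2^{2m}-1$; what it does not reproduce is the paper's characterization of \emph{which} $N$ fail, which the analytic route makes somewhat more visible. Your closing remark about unit-hood being preserved under multiplication by $u^{\pm m}$ in $\mathbb{F}_2[u,u^{-1}]/(u^N-1)$ is exactly the right point to flag, and it is handled correctly.
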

\begin{proof}
Clearly, $p(u)=1$ is $N$-invertible for all $N$. For the converse we use Ref.~\cite{Voorhees1994}, which shows that a polynomial $p(u)$ over $\mathbb{Z}_2$ is $N$-invertible if and only if no $N$-th root of unity $\omega$ is a root of $p(u)$, \textit{i.e.} $p(\omega)\equiv 0$. To finish the proof, We will first show that $p(u)$ has such a root on the unit circle. Then, we will show that it must be an $N$-th root of unity for some $N$. 

By assumption, $p(u)$ is a symmetric Laurent polynomial over $\mathbb{Z}_2$, meaning it can be written in the following form,
\begin{equation}
p(u)=\beta +\sum_k \alpha_k(u^k+u^{-k})
\end{equation}
for some $\beta,\alpha_k\in\{0,1\}$. If $\beta=0$, then $p(1)\equiv 0$, and therefore 1 is trivially a root, so assume $\beta=1$. If we write $u=e^{i\phi}$ for some angle $\phi$, then we have,
\begin{equation}
p(e^{i\phi})=1+2\sum_{k} \alpha_k \cos(k\phi).
\end{equation}
Let $K$ denote the number of non-zero $\alpha_k$ (which is finite), giving $p(1)=2K+1$. Let $\theta=\frac{\pi}{k}$ for some $k$ with $\alpha_k\neq 0$, giving $p(e^{i\theta})\leq 2K-1$. Then, since $p(e^{i\phi})$ is continuous, there must exist some $\psi\in(0,\theta)$ such that $p(e^{i\psi})=2K\equiv 0$. Calling $\omega=e^{i\psi}$, we have a root on the unit circle.

Next, notice that $p(u^{2^n})\equiv p(u)^{2^n}$ for any $n\in \mathbb{N}$. This holds because our coefficients are in $\mathbb{Z}_2$, so the cross terms cancel out each time $p(u)$ is squared. Hence we have $p(\omega^{2^n})=p(\omega)^{2^n}\equiv 0$, so $ \omega^{2^n}$ is also a root for all $n\in \mathbb{N}$. But there are only a finite number of roots on the unit circle (since $p(u)$ is bounded on it), so we must have $\omega^{2^n}=\omega^{2^m}$ for some $m<n$. Thus $\omega$ is an $N$-th root of unity for $N=k(2^n-2^m)$ for any $k\in\mathbb{N}$, and therefore $p(u)$ is not $N$-invertible.
\end{proof}

Applying this Lemma to $t_{12}$, which is by definition a symmetric Laurent polynomial, we see that if we want $\mathcal{A}_{T,N}$ to be injective for all $N$, we need to fix $t_{12}=1$. By examining the stabilizer representation of our fixed-point PEPS derived in Appendix \ref{sec:stabilizers}, it is clear that the PEPS depends only on $t_{12}$ and $\mathrm{Tr}(t)$. Hence, if we set $t_{12}=1$, the only meaningful degree of freedom left is $\mathrm{Tr}(t)$, which we can obtain by setting $t_{11}=\mathrm{Tr}(t)$ and $t_{22}=0$. Finally, our condition that $t$ has unit determinant enforces $t_{21}=1$, and we are left with Eq.~(\ref{eq:simpleqca2}).

\subsection{Properties of fixed-point PEPS with simple CQCA}

Here we note some properties of fixed-point PEPS defined by simple CQCA. 

\vspace{5mm}
\noindent
\textit{(i)} The PEPS correspond to graph states defined by some underlying graph \cite{Hein2004}, up to global application of the unitary $S$ (Eq.~(\ref{eq:sgate})). This can be clearly seen by examining the stabilizer representation of the state derived in Appendix \ref{sec:stabilizers}. 

\vspace{5mm}
\noindent
\textit{(ii)} The $L$-cycle symmetry representation $\bm{\xi}\mapsto U_T(\bm{\xi})$ defined by Eqs.~(\ref{eq:onsitesymm}) and (\ref{eq:cyclesymmdef}) is always a faithful representation of $\mathbb{Z}_2^N\times \mathbb{Z}_2^N$. To see this, suppose that $U_T(\bm{\xi})=\mathbb{I}$ for some $\bm{\xi}\in \mathbb{Z}_2^N\times\mathbb{Z}_2^N$. In particular, this implies that $u(\bm{\xi})=u(t\bm{\xi})=\mathbb{I}$. Since $u(\bm{\xi})=\bigotimes_{i=1}^N X_i^{\xi^X_i}$, we see that $\xi^X_i=0$,  $\forall i$. Using this and Eq.~(\ref{eq:simpleqca2}), we have that $u(t\bm{\xi})=\bigotimes_{i=1}^N X_i^{\xi^Z_i}$, which then implies $\xi^Z_i=0$,  $\forall i$. So $\bm{\xi}=0$ and thus $\bm{\xi} \mapsto U_T(\bm{\xi})$ is a faithful representation.

\vspace{5mm}
\noindent
\textit{(iii)} In Sec. \ref{sec:computation} we introduce an altered construction of fixed-point PEPS from CQCA that uses a two-qubit unit cell (Eq.~(\ref{eq:2qubitqcapeps})). This construction subsumes the original one, in that all fixed-point PEPS defined by Eq.~(\ref{eq:qcapeps}) with a simple CQCA can also be constructed by Eq.~(\ref{eq:2qubitqcapeps}). To see this, note that the fixed-point PEPS defined by Eq.~(\ref{eq:qcapeps}) can be written as an MPS as follows,
\begin{equation}
{|\psi_T\rangle}=\sum_{\mathbf{j}_1,\dots\mathbf{j}_M} \mathrm{Tr}\left( Z^{\mathbf{j}_1}T \dots Z^{\mathbf{j}_M}T \right) {|\mathbf{j}_1,\dots,\mathbf{j}_M\rangle}.
\end{equation}
Where $P^{\mathbf{j}}=\bigotimes_{i=1}^N (P_i)^{j_i}$ for a Pauli operator $P=X,Y,Z$. If we commute half of the $Z^{\mathbf{j}}$ past the neighbouring CQCA $T$ on the right, and we use the fact that $T(Z^{\mathbf{j}})=X^{\mathbf{j}}$ for simple CQCA, we get
\begin{align}
&{|\psi_T\rangle}= \nonumber \\
&\sum_{\mathbf{j}_1\dots\mathbf{j}_M} \mathrm{Tr}\left( X^{\mathbf{j}_{M}} Z^{\mathbf{j}_1}T^2  \dots X^{\mathbf{j}_{M-2}} Z^{\mathbf{j}_{M-1}}T^2 \right)
{|\mathbf{j}_1,\dots,\mathbf{j}_M\rangle}.
\end{align}
If we group the $i$-th qubits of columns $2k$ and $2k+1$ together, this state is exactly the fixed-point PEPS defined by Eq.~(\ref{eq:2qubitqcapeps}) with CQCA $T^2$.

\section{Proof of Theorem \ref{theorem:1}} \label{app:cycle}

Here we show how to constrain the MPS tensors of states which have 1D SPT order with respect to $L$-cycle symmetries. We begin with a state whose MPS tensor is injective and satisfies Eq.~(\ref{eq:blockcyclesymm}), where $V(g)$ is a projective representation of a finite Abelian group $G$ with maximally non-commutative cocycle $\omega$. Using this, we would like to prove Eq.~(\ref{eq:qcamps}).

First, we prove Eq.~(\ref{eq:singletenssycle}). In general, if a state is invariant under the $L$-cycle symmetry in Eq.~(\ref{eq:cyclesymm}), then we have~\cite{Molnar2018},
\begin{equation} \label{eq:fundtheorem}
\includegraphics[scale=1,valign=c,raise=0.28cm]{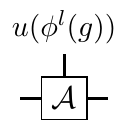}=\includegraphics[scale=1,valign=c]{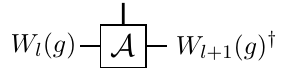},
\end{equation}
for some unitaries $W_l(g)$ ($l=0,\dots,L$) with $W_0(g)=W_L(g)$. This equation, and all others that follow in this section, hold for all $g\in G$. From Eq.~(\ref{eq:fundtheorem}), we get,
\begin{align} 
&\hspace{1.01cm} \includegraphics[scale=0.95,valign=c,raise=0.33cm]{blockcyclesymm_a.pdf} \nonumber \\
&\centerline{=} \nonumber \\
&\includegraphics[scale=0.95,valign=c]{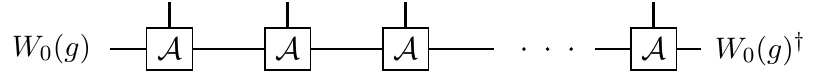}.
\end{align}
If the tensor is injective, the only way this equation and Eq.~(\ref{eq:blockcyclesymm}) can both be true is if $W_0(g) \propto V(g)$. 
If we connect the relations in Eq.~(\ref{eq:fundtheorem}) in a different order, we find,
\begin{align} 
&\hspace{1.01cm} \includegraphics[scale=0.95,valign=c,raise=0.33cm]{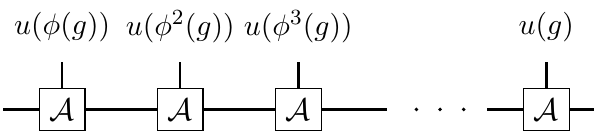} \nonumber \\
&\centerline{=}\nonumber \\
&\includegraphics[scale=0.95,valign=c]{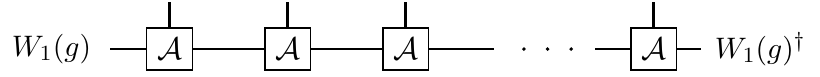},
\end{align}
where we have used the fact that $\phi^L(g)=g$. Now compare this to Eq.~(\ref{eq:blockcyclesymm}) with $g$ replaced by $\phi(g)$. Again, by injectivity, we find that $W_1(g)\propto V(\phi(g))$. So, for some scalar $\lambda (g)$, we have,
\begin{equation} \label{eq:singletenscycle2}
\includegraphics[scale=1,valign=c,raise=0.28cm]{singletenscycle_a.pdf}=\includegraphics[scale=1,valign=c]{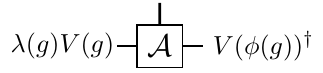}.
\end{equation}

Now, we would like to use Eq.~(\ref{eq:singletenscycle2}) to prove Eq.~(\ref{eq:qcamps}). This proof is similar to that of Theorem 1 in Ref.~\cite{Else2012}, adapted to $L$-cycle symmetries. First, we invoke maximal non-commutativity of the cocycle $\omega$, which says that there is only one irreducible representation (irrep) with cocycle $\omega$, up to unitary equivalence (\cite{Berkovich1998}, Theorem 6.39). Calling this irrep $\widetilde{V}(g)$, this implies that $V(g)$, being a reducible representation, must have the form $\mathbb{I}\otimes \widetilde{V}(g)$ in some basis. 

Then, if $g\mapsto u(g)$ is a linear representation, and $\widetilde{V}(g)$ is an irrep with cocycle $\omega$, Eq.~(\ref{eq:singletenscycle2}) tells us that the cocycle of $\widetilde{V}(\phi(g))$ must be in the same cohomology class as $\omega$. That is, there exist phases $\gamma(g)$ such that $\gamma(g) \widetilde{V}(\phi(g))$ has cocycle $\omega$. Hence, by maximal non-commutativity, we have $\gamma(g) \widetilde{V}(\phi(g))=\Phi^\dagger \widetilde{V}(g) \Phi$ for some unitary $\Phi$. Then it follows from Eq.~(\ref{eq:singletenscycle2}) that $g\mapsto \lambda(g)\gamma(g)$ is a 1D representation of $G$. Then, from Ref.~\cite{Else2012}, we can find a unitary  $\Lambda$ such that $\overline{\lambda(g)\gamma(g)} \widetilde{V}(g)=\Lambda^\dagger \widetilde{V}(g) \Lambda$.

With the above we can now rewrite Eq.~(\ref{eq:singletenscycle2}) as,
\begin{equation}
\includegraphics[scale=1,valign=c,raise=0.26cm]{singletenscycle_a.pdf}\ =\includegraphics[scale=1,valign=c]{singletenscycle3_b.pdf}.
\end{equation}
If we multiply each side by $\mathbb{I}\otimes \Phi^\dagger\Lambda^\dagger$ on the right, then we get a relation for the composite tensor $\mathcal{A}^j(\mathbb{I}\otimes \Phi^\dagger\Lambda^\dagger)$. With this relation, we can continue as in Theorem 1 of Ref.~\cite{Else2012} to obtain,
\begin{equation}
\mathcal{A}^j(\mathbb{I}\otimes \Phi^\dagger\Lambda^\dagger)=B^j\otimes C^j ,
\end{equation}
hence,
\begin{equation} \label{eq:qcamps2}
\mathcal{A}^j=B^j\otimes (C^j\Lambda \Phi) .
\end{equation}
This expression holds with respect to the basis of physical spins $\{|j\rangle\}$ that diagonalizes $u(g)$, such that $u(g){|j\rangle}=\chi_j(g){|j\rangle}$ where $g\mapsto \chi_j(g)$ are 1D representations of $G$. We then have $C^j=\widetilde{V}(g_j)$ where $g_j$ are defined uniquely by the relation $\chi_j(g) \widetilde{V}(g)=\widetilde{V}(g_j)^\dagger \widetilde{V}(g)\widetilde{V}(g_j)$. $B^j$ are unconstrained tensors and can vary throughout the SPT phase. Since $ \lambda$ is not uniformly defined throughout the SPT phase, $\Lambda$ can also vary throughout the phase. 

We would like to rewrite Eq.~(\ref{eq:qcamps2}) in such a way that we separate the parts which are universal throughout the phase, namely $\Phi$ and $C^j$, from the rest. We can accomplish this in the following way. On each block of $L$ sites, we push the matrix $\Lambda$ through the tensors to the end of the block,
\begin{align} \label{eq:blockti}
&B^{j_1}\dots B^{j_L}\otimes C^{j_1}\Lambda \Phi \dots C^{j_L} \Lambda\Phi= \nonumber \\
&B^{j_1}_{[1]} \dots B^{j_L}_{[L]}\otimes C^{j_1} \Phi \dots C^{j_L}\Phi  \Lambda_*.
\end{align}
$\Lambda$ commutes with all $\widetilde{V}(g_j)$ up to a phase. So when $\Lambda$ passes through the tensor $\mathcal{A}^j$, it leaves behind a phase, which is absorbed into the tensor $B^j$. The tensors $B^j_{[l]}$ are the original tensors $B^j$ along with these phases, which are in general not translationally invariant within a block (We do, however, have $B^j_{[l]}=B^j_{[l+L]}$, so the representation is invariant under translation by $L$ sites). The matrix $\Lambda_*:=  \Lambda\Phi^{L-1\dagger}\Lambda\Phi^{L-1}\dots \Phi^\dagger\Lambda\Phi$ is a scalar matrix and can thus be removed from the above expression. To see this, note that Eq.~(\ref{eq:blockcyclesymm}) says that $\lambda (g)$ must satisfy the constraint 
\begin{equation}
\lambda_*(g):=  \lambda(g)\lambda(\phi(g))\dots\lambda(\phi^{L-1}(g))=1. 
\end{equation}By definition, we have $\overline{\lambda_*(g)} \widetilde{V}(g)=\Lambda_*^{\dagger} \widetilde{V}(g) \Lambda_*$. Since $\lambda^*(g)=1$, we have $\Lambda_*\propto \mathbb{I}$ by Schur's lemma. The proportionality constant can be absorbed into one of the tensors $B^j_{[l]}$ and therefore $\Lambda_*$ can be removed from Eq.~(\ref{eq:blockti}). Hence, we can represent our state by an MPS of the form
\begin{equation} \label{eq:qcamps3}
A_{[l]}^j=B_{[l]}^j\otimes (C^j\Phi),
\end{equation}
as desired. $\Box$

\section{Period of fractal CQCA} \label{app:fractal}

\begin{table}
\centering
\vspace{5mm}
\begin{tabular}{|c|c|c|c|c|c|}
\hline
\textbf{N}  & L  & \textbf{N}  & L    & \textbf{N}  & L     \\ \hline
\textbf{2}  & 3  & \textbf{18} & 84   & \textbf{34} & 510   \\
\textbf{4}  & 6  & \textbf{20} & 60   & \textbf{36} & 168   \\
\textbf{6}  & 12 & \textbf{22} & 186  & \textbf{38} & 1026  \\
\textbf{8}  & 12 & \textbf{24} & 48   & \textbf{40} & 120   \\
\textbf{10} & 30 & \textbf{26} & 126  & \textbf{42} & 2340  \\
\textbf{12} & 24 & \textbf{28} & 36   & \textbf{44} & 372   \\
\textbf{14} & 18 & \textbf{30} & 1020 & \textbf{46} & 12282 \\
\textbf{16} & 24 & \textbf{32} & 48   & \textbf{48} & 96    \\ \hline
\end{tabular}
\caption{The period $L$ as a function of the circumference $N$ for the fractal CQCA $T_f$ in Eq.~(\ref{eq:tf}). For $N=2^k$, the relationship is linear. In general, $L$ appears to grow with an exponential envelope.}
\label{tab:period}
\end{table}

Here, we look at the period of fractal CQCA on rings of circumference $N$. This is in general a difficult problem to solve, and the period $L$ can appear to grow exponentially with $N$ (see Table \ref{tab:period} for an example). Nevertheless, we prove that, for certain values of $N$, the period can be shown to have a linear relation in general. Specifically, we prove that, for a fractal CQCA $T$ on a ring of circumference $N=2^k$, the period is either $L=N=2^k$ or $L=\frac{3}{2}N=2^k+2^{k-1}$. Throughout this proof, we will be taking powers of polynomials like $\sum_k c_k u^k$. As in Appendix \ref{app:simpleqca}, we can write $(\sum_k c_k u^k)^{2^n}=\sum_k c_k (u^k)^{2^n}$ $\forall n\in \mathbb{N}$. This simplifies the following calculations significantly.

Let 
\begin{equation}
\gamma:=\mathrm{Tr}(t)=\beta +\sum_i \alpha_i(u^i+u^{-i}),
\end{equation}
for $\alpha_i,\beta\in\{0,1\}$. With the periodic boundary conditions, we identify $u^{-2^{k-1}}$ and $u^{2^{k-1}}$. Then we have 
\begin{equation}
\gamma^{2^{k-1}}=\beta +\sum_i \alpha^i((u^{2^{k-1}})^i+(u^{-2^{k-1}})^i)=\beta.
\end{equation}
The proof now splits into two cases:

\textit{Case 1: $\beta=0$}. From Eq.~(\ref{eq:cayley}), we have 
\begin{equation}
t^{2^k}=(t^{2})^{2^{k-1}}=(\gamma t+\mathbb{I})^{2^{k-1}}=\gamma^{2^{k-1}}t^{2^{k-1}}+\mathbb{I},
\end{equation}
where cross terms again cancel out. Since $\gamma^{2^{k-1}}=\beta=0$, we have $t^{2^k}=\mathbb{I}$, showing $L=2^k=N$. $\Box$

\textit{Case 2: $\beta=1$}. We need a formula for $t^{2^k+2^{k-1}}$. The result, which we will prove by induction, is
\begin{align} \label{eq:fracperiod}
&t^{2^k+2^{k-1}}= \nonumber\\
&(1+\gamma^{2^k})\gamma^{2^{k-1}-1} t+\left[(1+\gamma^{2^k})p_k(\gamma)+\gamma^{2^{k-1}}\right]\mathbb{I}.\nonumber
\end{align}
Therein, $\gamma=\mathrm{Tr}(t)$, and $p_k(\gamma)$ is some polynomial in $\gamma$. The expression is simple to confirm for $k=1$ using Eq.~(\ref{eq:cayley}), where $p_1(\gamma)=0$. Now assume it is true for $k$. Then we have
\begin{align}
&t^{2^{k+1}+2^{k}}=(t^{2^k+2^{k-1}})^2= \nonumber\\
&(1+\gamma^{2^{k+1}})\gamma^{2^k-2}t^2+\left[(1+\gamma^{2k})p_k(\gamma)+\gamma^{2^{k-1}}\right]^2\mathbb{I}.\nonumber
\end{align}
Now, we apply Eq.~(\ref{eq:cayley}) to find
\begin{align}
&t^{2^{k+1}+2^{k}}=\nonumber \\
&(1+\gamma^{2^{k+1}})\gamma^{2^{k}-1} t+\left[(1+\gamma^{2^{k+1}})p_{k+1}(\gamma)+\gamma^{2^{k}}\right]\mathbb{I},\nonumber
\end{align}
which is the desired expression where $p_{k+1}(\gamma)=p^2_k(\gamma)+\gamma^{2^{k}-2}$. So the formula holds for all $k$. Now, since $\gamma^{2^{k-1}}=\beta=1$, we get $t^{2^k+2^{k-1}}=\mathbb{I}$, so $L={2^k+2^{k-1}}=\frac{3}{2}N$. $\Box$

\printbibliography


\end{document}